\newtheorem{theorem}{Theorem}
\newtheorem{lemma}[theorem]{Lemma}
\newcommand{\eg}{\textit{e.g.,}}
\newcommand{\ie}{\textit{i.e.,}}
\newcommand{\ai}{\textit{et al.}}
\newcommand{\Rp}{R}
\newcommand{\no}[1]{}
\newcommand{\algprm}[1]{\textsc{#1}} 
\newcommand{\rank}{\algprm{Rank}}
\newcommand{\select}{\algprm{Select}}
\newcommand{\raiz}{\algprm{Root}}
\newcommand{\wsum}{\algprm{Wsum}}
\newcommand{\wmin}{\algprm{Wmin}}
\newcommand{\wsucc}{\algprm{Wsucc}}
\newcommand{\frank}[3]{\ensuremath{\rank(#1, #2, #3)}}
\newcommand{\fselect}[3]{\ensuremath{\select(#1, #2, #3)}}
\newcommand{\imp}{\ensuremath{\textsc{imp}(y_0, y_1)}}
\newcommand{\impN}{\ensuremath{\textsc{imp}}}
\newcommand{\conta}{\ensuremath{\algprm{Count}(Q)}}
\newcommand{\contaN}{\ensuremath{\algprm{Count}}}
\newcommand{\soma}{\ensuremath{\algprm{Sum}(Q)}}
\newcommand{\somap}{\ensuremath{\algprm{Sum}'(Q)}}
\newcommand{\somapN}{\ensuremath{\algprm{Sum}'}}
\newcommand{\avga}{\ensuremath{\algprm{Avg}(Q)}}
\newcommand{\vara}{\ensuremath{\algprm{Var}(Q)}}
\newcommand{\somaN}{\ensuremath{\algprm{Sum}}}
\newcommand{\avgaN}{\ensuremath{\algprm{Avg}}}
\newcommand{\varaN}{\ensuremath{\algprm{Var}}}
\newcommand{\mina}{\ensuremath{\algprm{Min}(Q)}}
\newcommand{\maxa}{\ensuremath{\algprm{Max}(Q)}}
\newcommand{\minaN}{\ensuremath{\algprm{Min}}}
\newcommand{\maxaN}{\ensuremath{\algprm{Max}}}
\newcommand{\preda}{\ensuremath{\algprm{Pred}(w, Q)}}
\newcommand{\succa}{\ensuremath{\algprm{Succ}(w, Q)}}
\newcommand{\predaN}{\ensuremath{\algprm{Pred}}}
\newcommand{\succaN}{\ensuremath{\algprm{Succ}}}
\newcommand{\maja}{\ensuremath{\algprm{Majority}(\alpha,Q)}}
\newcommand{\majaN}{\ensuremath{\algprm{Majority}(\alpha)}}
\newcommand{\quantile}{\ensuremath{\algprm{Quantile}(k,Q)}}
\newcommand{\quantileN}{\ensuremath{\algprm{Quantile}}}
\newcommand{\Pe}{\ensuremath{\mathbf{P}}}
\begin{document}
\title{Space-Efficient Data-Analysis Queries on Grids}
\author[chile]{Gonzalo Navarro\fnref{fn1}}
\ead{gnavarro@dcc.uchile.cl} 
\address[chile]{Dept. of Computer Science, University of Chile.}\fntext[fn1]{Author
supported in part by Millennium Institute for Cell Dynamics and Biotechnology (ICDB), Grant ICM P05-001-F, Mideplan, Chile.}
\author[chile]{Yakov Nekrich\fnref{fn1}}
\ead{yakov.nekrich@googlemail.com}
\author[ist,inesc]{Lu\'{i}s M.\ S.\ Russo\fnref{fn2}}
\ead{luis.russo@ist.utl.pt} 
\address[ist]{Instituto Superior T\'{e}cnico
  - Universidade T\'{e}cnica de Lisboa
  (IST/UTL).}
\address[inesc]{INESC-ID / KDBIO}
\fntext[fn2]{Author
supported by FCT through projects TAGS PTDC/EIA-EIA/112283/2009, HELIX
PTDC/EEA-ELC/113999/2009 and the PIDDAC Program funds (INESC-ID
multiannual funding).}

%
%
%
%
\begin{abstract}
\label{sec:abstract}
  We consider various data-analysis queries on two-dimensional points.
  We give new space/time tradeoffs over previous work on geometric
  queries such as dominance and rectangle visibility, and on semigroup
  and group queries such as sum, average, variance, minimum and
  maximum. We also introduce new solutions to queries less frequently
  considered in the literature such as two-dimensional quantiles,
  majorities, successor/predecessor, mode, and various top-$k$ queries, 
  considering static and dynamic scenarios.
\end{abstract}
\begin{keyword}
%
range queries \sep databases \sep succinct data structures \sep
dynamic data structures \sep statistical database queries \sep
orthogonal range queries \sep point dominance \sep rectangle visibility \sep
wavelet tree \sep range minimum queries \sep alpha majority \sep
quantile \sep top-k \sep mode
\MSC[2010] 11Y16 \sep 68Q25 \sep 03C13 \sep 62-04
\end{keyword}
\maketitle
\section{Introduction}

Multidimensional grids arise as natural representations to support conjunctive 
queries in databases \cite{BCKO08}. Typical queries such as ``find all the 
employees with age between $x_0$ and $x_1$ and salary between $y_0$ and $y_1$'' 
translate into a two-dimensional range reporting query on coordinates age and 
salary. More generally, such a grid representation of the data is useful to
carry out a number of {\em data analysis} queries over large repositories.
Both space and time efficiency are important when analyzing the performance 
of data structures on massive data. However, in cases of very large data 
volumes the space usage can be even more important than the time needed to 
answer a query. More or less space usage can make the difference between 
maintaining all the data in main memory or having to resort to disk, which is
orders of magnitude slower.

In this paper we study various problems on two-dimensional grids that
are relevant for data analysis, focusing on achieving good time performance
(usually polylogarithmic) within the least possible space (even succinct for 
some problems).

Counting the points in a two-dimensional range $Q=[x_0,x_1] \times
[y_0,y_1]$, \ie\ computing \conta, is arguably the most primitive
operation in data analysis. Given $n$ points on an $n\times n$ grid, one can 
compute \contaN\ in time $O(\log n/\log\log n)$ using ``linear'' space, $O(n)$
{\em integers} \cite{Nek09}. This time is optimal within space
$O(n\,\mathrm{polylog}(n))$ \cite{Pat07}, and it has been matched using
asymptotically minimum (\ie\ succinct) space, $n+o(n)$ integers, by Bose 
\ai~\cite{BHMM09}. 

The $k$ points in the range can be reported in time $O(\log\log n +k)$ using 
$O(n\log^\epsilon n)$ integers, for any constant $\epsilon>0$ \cite{ABR00}. 
This time is optimal within space $O(n\,\mathrm{polylog}(n))$, by reduction
from the colored predecessor problem \cite{PT06}. With $O(n)$ integers, the 
time becomes $O((k+1)\log^\epsilon n)$ \cite{CLP11}. 
Within $n+o(n)$ integers space, one can achieve time
$O(k\frac{\log n}{\log \log n})$ \cite{BHMM09}.

We start with two geometric problems, {\em
  dominance} and {\em rectangular visibility}. These
enable data analysis queries such as ``find the employees with worst
productivity-salary combination within productivity range $[x_0,x_1]$
and salary range $[y_0,y_1]$'', that is, such that no less productive
employee earns more.

The best current result for the 4-sided variant of these problems
(\ie\ where the points are limited by a general rectangle $Q$) is a
dynamic structure by Brodal and Tsakalidis \cite{TB11}. It requires
$O(n\log n)$-integers space and reports the $d$ dominant/visible
points in time $O(\log^2 n + d)$. Updates take $O(\log^2 n)$
time. They achieve better complexities for simpler variants of the
problem, such as some 3-sided variants. 

Our results build on the {\em wavelet tree} \cite{GGV03}, a succinct-space 
variant of a classical structure by Chazelle \cite{Cha88}. The wavelet tree
has been used to handle various geometric problems, \eg\
\cite{MN07,GPT09,BHMM09,BLNS10,BCN10,GNP11}.  We show in
Section~\ref{sec:geometric-queries} how to use wavelet trees to solve dominance
and visibility problems using $n+o(n)$ integers space and $O((d+1)\log n)$
time. 
The dynamic version also uses succinct space and requires
$O((d+1)\log^2 n / \log\log n)$ time, carrying out updates in time
$O(\log^2 n / \log\log n)$. Compared to the best current result \cite{TB11},
our structure requires succinct space instead of $O(n\log n)$ integers,
offers better update time, and has a comparable query time (being faster for 
small $d = O(\log\log n)$).

The paper then considers a wide range of queries we call
``statistical'': The points have an associated value in $[0,W) =
[0,W-1]$ and, given a rectangle $Q$, we consider the following
queries:
\begin{description} 
\item[\rm $\somaN$/$\avgaN$/$\varaN$:] The sum/average/variance of the values 
in $Q$ (Section~\ref{sec:sum}).
\item[\rm $\minaN$/$\maxaN$:] The minimum/maximum value in $Q$
(Section~\ref{sec:rmq}).
\item[\rm $\quantileN$:] The $k$-th smallest value in $Q$
(Section~\ref{sec:quantile}).
\item[\rm $\majaN$:] The values appearing with relative frequency $>\alpha$ in 
$Q$ (Sections~\ref{sec:majf} and \ref{sec:majv}).
\item[\rm $\succaN$/$\predaN$:] The successor/predecessor of a value $w$ in $Q$
(Section~\ref{sec:suc}).
\end{description}

These operations enable data-analysis queries such as ``the average 
salary of employees whose annual production is between $x_0$ and $x_1$ and 
whose age is between $y_0$ and $y_1$''. The minimum operation can be used to 
determine ``the employee with the lowest salary'', in the previous 
conditions. The $\alpha$-majority operation can be used to compute ``which are 
frequent ($\ge 20\%$) salaries''. Quantile queries enable us  to determine 
``which are the 10\% highest salaries''. Successor queries can be used 
to find the smallest salary over $\$100{,}000$ among those employees.

Other applications for such queries are frequently found in Geographic 
Information Systems (GIS), where the points have a geometric interpretation 
and the values can be city sizes, industrial production, topographic heights, 
and so on. Yet another application comes from Bioinformatics, where 
two-dimensional points with intensities are obtained from DNA microarrays, and 
various kinds of data-analysis activities are carried out on them.
See Rahul \ai~\cite{RGJR11} for an ample discussion on some of these 
applications and several others.

A popular set of statistical queries includes range sums, averages, and
maxima/minima. Willard \cite{Wil85} solved two-dimensional range-sum queries 
on finite groups within $O(n\log n)$-integers space and $O(\log n)$ time. 
This includes $\somaN$ and is easily extended to $\avgaN$ and $\varaN$. Alstrup
\ai~\cite{ABR00} obtained the same complexities for the semigroup model, which 
includes $\minaN$/$\maxaN$. The latter can also be solved in constant time 
using $O(n^2)$-integers space \cite{AFL07,BDR11}. Chazelle \cite{Cha88} showed
how to reduce the space to $O(n/\epsilon)$ integers and achieve time
$O(\log^{2+\epsilon} n)$ on semigroups, and $O(\log^{1+\epsilon} n)$ for
the particular case of $\minaN/\maxaN$.

On this set of queries our contribution is to achieve good time complexities
within linear and even succinct space. This is relevant to 
handle large datasets in main
memory. While the times we achieve are not competitive when using $O(n)$ 
integers or more, we manage to achieve polylogarithmic times within just
$n\log n + o(n\log n)$ bits on top of the bare coordinates and values,
which we show is close to the information-theoretic minimum space necessary
to represent $n$ points.

As explained, we use wavelet
trees. These store bit vectors at the nodes of a tree that decomposes the 
$y$-space of the grid. The vectors track down the points, sorted on top by 
$x$-coordinate and on the bottom by $y$-coordinate. We enrich wavelet trees 
with extra data aligned to the bit vectors, which speeds up the computation of 
the statistical queries. Space is then reduced by sparsifying these extra data.

We also focus on more sophisticated queries, for which fewer results exist,
such as quantile, majority, and predecessor/successor queries.

In one dimension, the best result we know of for quantiles queries is a 
linear-space structure by Brodal and J{\o}rgensen \cite{BJ09}, which finds the 
$k$-th element of any range in an array of length $n$ in 
time $O(\log n / \log\log n)$, which is optimal.

An $\alpha$-majority of a range $Q$ is a value that occurs more than
$\alpha \cdot \conta$ times inside $Q$, for some $\alpha \in [0,1]$.
The $\alpha$-majority problem was previously considered in one and two
dimensions~\cite{KN08,DHMNS11,springerlink:10.1007/978-3-642-24583-1_29}.  
Durocher \ai~\cite{DHMNS11} solve
one-dimensional $\alpha$-range majority queries in time $O(1/\alpha)$
using $O(n(1+\log(1/\alpha)))$ integers.  Here $\alpha$ must be chosen
when creating the data structure. A more recent result, given by
Gagie \ai~\cite{springerlink:10.1007/978-3-642-24583-1_29}, 
obtains a structure of $O(n^2(H + 1) \log (1/\alpha))$ bits for a
dense $n \times n$ matrix (\ie\ every position contains an
element), where $H$ is the entropy of the distribution of elements.
In this case $\alpha$ is also chosen at indexing time, and
the structure can answer queries for any $\beta \geq \alpha$. The resulting
elements are not guaranteed to be $\beta$-majorities, as the list may
contain false positives, but there are no false negatives.

Other related queries have been studied in two dimensions.
Rahul \ai~\cite{RGJR11} considered a variant of \quantileN\ 
where one reports the {\em top-$k$ smallest/largest values} in a range. 
They obtain $O(n\log^2 n)$-integers space and $O(\log n + k\log\log n)$ time. 
Navarro and Nekrich \cite{NN12} reduced the space to $O(n/\epsilon)$ integers,
with time $O(\log^{1+\epsilon} n + k\log^\epsilon n)$.
Durocher and Morrison \cite{DM11} consider the {\em mode} (most repeated value)
in a two-dimensional range. Their times are sublinear but 
super-polylogarithmic by far.

Our contribution in this case is a data structure of $O(n\log n)$ integers able
to solve the three basic queries in time $O(\log^2 n)$. The space can be
stretched up to linear, but at this point the times grow to the form
$O(n^\epsilon)$. Our solution for range majorities lets $\alpha$ to be
specified at query time. For the case of $\alpha$ known at indexing time,
we introduce a new linear-space data structure that answers queries in
time $O(\log^3 n)$, and up to $O(\log^2 n)$ when using $O(n\log n)$ integers 
space.

\begin{table}[bt]
\begin{center}
\scalebox{0.65}{
\begin{tabular}{l@{~}|@{~}c@{~}|@{~}c@{~}|@{~}c@{~}|@{~}l}
Operation & Space per point (bits) & Time & Time in linear space & Source \\
\hline
$\somaN$, $\avgaN$, $\varaN$ 
	& $\log n(1+1/t)$ 
		& $O(\min(t \log W,\log n)t \log W \log n)$ 
			& $O(\log^3 n)$ 
				& Thm.~\ref{thm:soma} \\
$\minaN$, $\maxaN$ 
	& $\log n (1+1/t)$ 
		& $O(\min(t \log m,\log n)t\log n)$ 
			& $O(\log^2 n)$ 
				& Thm.~\ref{thm:mina} \\
$\majaN$, fixed
	& $\log n(2+1/t) +\log m$
		& $O(t\log m \log^2 n)$
			& $O(\log^3 n)$
				& Thm.~\ref{thm:fmaja} \\
$\quantileN$ 
	& $\log n \log_\ell m + O(\log m)$ 
		& $O(\ell \log n \log_\ell m)$ 
			& $O(n^\epsilon)$
				& Thm.~\ref{thm:quantile} \\
$\majaN$, variable
	& $\log n \log_\ell m + O(\log m)$ 
		& $O(\frac{1}{\alpha}\ell \log n \log_\ell m)$ 
			& $O(n^\epsilon)$
				& Thm.~\ref{thm:maja} \\
$\succaN$, $\predaN$
	& $\log n \log_\ell m + O(\log m)$
		& $O(\ell \log n \log_\ell m)$
			& $O(n^\epsilon)$
				& Thm.~\ref{thm:preda} \\
\end{tabular}
}
\end{center}
\caption{Our static results on statistical queries, for $n$ two-dimensional 
points with associated values in $[0,W)$; $m = \min(n,W)$; $2 \le \ell \le u$
and $t \ge 1$ are parameters. The space omits the mapping of the real $(x,y)$ 
coordinates to the space $[0,n)$, as well as the storage of the point values.
The 5th column gives simplified time assuming $\log n = \Theta(\log W)$, any
constant $\epsilon$, and use of $O(n\log n)$ bits.}
\label{tab:results}
\end{table}

\begin{table}[bt]
\begin{center}
\scalebox{0.65}{
\begin{tabular}{l@{~}|@{~}c@{~}|@{~}c@{~}|@{~}c} 
Operation & Space per point (bits) & Query time & Update time \\ 
\hline
$\somaN$, $\avgaN$, $\varaN$ 
	& $\log U(1+o(1)+1/t)$ 
		& $O(\log U \log n(1+\frac{\min(t\log W,\log U)t\log W}{\log\log n}))$ 
			& $O(\log U \log n)$  \\
$\minaN$, $\maxaN$ 
	& $\log U(1+o(1)+1/t)$ 
		& $O(\log U \log n(1+\frac{\min(t\log W,\log U)t\log W}{\log\log n}))$ 
			& $O(\log U(\log n + t\log W))$  \\
$\quantileN$ 
	& $\log U \log_\ell W (1+o(1))$ 
		& $O(\ell \log U \log n \log_\ell W/\log\log n)$ 
			& $O(\log U\log n \log_\ell W / \log\log n)$ \\
$\majaN$, var.
	& $\log U \log_\ell W (1+o(1))$ 
		& $O(\frac{1}{\alpha}\ell \log U\log n \log_\ell W/\log\log n)$ 
			& $O(\log U\log n \log_\ell W / \log\log n)$ \\
$\succaN$, $\predaN$
	& $\log U \log_\ell W (1+o(1))$ 
		& $O(\ell \log U \log n \log_\ell W/\log\log n)$ 
			& $O(\log U\log n \log_\ell W / \log\log n)$ \\
\end{tabular}
}
\end{center}
\caption{Our dynamic results on statistical queries, for $n$ two-dimensional 
points on an $U \times U$ grid with associated values in $[0,W)$; $2 \le \ell 
\le W$, $t \ge 1$ and $0<\epsilon<1$ are parameters. The space omits the 
mapping of the real $x$ coordinates to $[0,n)$, as well as the point values.
The first line is proved in Thm.~\ref{thm:dyn-soma}, the second in
Thm~\ref{thm:dyn-mina}, and the rest in Thm.~\ref{thm:dynamic}.}
\label{tab:dynresults}
\end{table}

In this case we build a wavelet tree on the universe of the point values. A 
sub-grid at each node stores the points whose values are within a range.
With this structure we can also solve mode and {\em top-$k$ most-frequent} 
queries.

Table~\ref{tab:results} shows the time and space results we obtain in this 
article for statistical queries. Several of our data structures can be made
dynamic at the price of a sublogarithmic penalty factor in the time 
complexities, as summarized in Table~\ref{tab:dynresults}.

\section{Wavelet Trees} \label{sec:basics}

Wavelet trees \cite{GGV03} are defined on top of the basic \rank\ and \select\
functions. Let $B$ denote a bitmap, \ie\ a sequence of $0$'s and
$1$'s. \frank{B}{b}{i} counts the number of times bit
$b \in \{0,1\}$ appears in $B[0,i]$, assuming $\frank{B}{b}{-1} = 0$. 
The dual operation, \fselect{B}{b}{i}, returns the position of the $i$-th
occurrence of $b$, assuming $\fselect{B}{b}{0}=-1$. 

The wavelet tree represents a sequence $S[0,n)$ over alphabet 
$\Sigma =[0,\sigma)$, and supports access to any $S[i]$, as well as 
\rank\ and \select\ on $S$, by 
reducing them to bitmaps. It is a complete binary tree where each node $v$ may
have a left child labeled $0$ (called the $0$-child of $v$) and a right child
labeled $1$ (called the 1-child). The sequence of labels obtained when
traversing the tree from the $\raiz$ down to a node $v$ is the {\em
  binary label} of $v$ and is denoted $L(v)$. Likewise we denote 
$V(L)$ the node that is obtained by following the sequence of bits
$L$, thus $V(L(v)) = v$. The binary labels of the leaves
correspond to the binary representation of the symbols of
$\Sigma$. Given $c \in \Sigma$ we denote by $V(c)$ 
the leaf that corresponds to symbol $c$. By
$c\{..d\}$ we denote the sequence of the first $d$ bits in
$c$. Therefore, for increasing values of $d$, the $V(c\{..d\})$ nodes
represent the path to $V(c)$.

Each node $v$ represents (but does not store) the subsequence $S(v)$
of $S$ formed by the symbols whose binary code starts with $L(v)$. At
each node $v$ we only store a (possibly empty) bitmap, denoted $B(v)$,
of length $|S(v)|$, so that $B(v)[i]=0$ iff $S(v)[i]\{..d\}=L(v)\cdot
0$, where $d=|L(v)|+1$, that is, if $S(v)[i]$ also belongs to the
$0$-child. A bit position $i$ in $B(v)$ can be mapped to a position in
each of its child nodes: we map $i$ to position $R(v,b,i) =
\frank{B(v)}{b}{i} - 1$ of the $b$-child.  We refer to this procedure
as the {\em reduction} of $i$, and use the same notation to represent
a sequence of steps, where $b$ is replaced by a sequence of bits. Thus
$R(\raiz, c, i)$, for a symbol $c \in \Sigma$, represents the
reduction of $i$ from the $\raiz$ using the bits in the binary
representation of $c$. With this notation we describe the way in
which the wavelet tree computes $\rank$, which is summarized by the
equation $ 
\frank{S}{c}{i} = R(\raiz, c, i)+1.  $ 
We use a similar notation $\Rp(v, v', i)$, to represent descending
from node $v$ towards a given node $v'$, instead of
explicitly describing the sequence of bits $b$ such that $L(v')= L(v)
\cdot b$ and writing $R(v, b, i)$.

An important path in the tree is obtained by choosing
$R(v,B(v)[i],i)$ at each node, \ie\ at each node we decide to go left
of right depending on the bit we are currently tracking. The resulting
leaf is $V(S[i])$, therefore this process provides a way to obtain the
elements of $S$. The resulting position is $R(\raiz, S[i], i) =
\rank(S,S[i],i)-1$.

It is also possible to move upwards on the tree, reverting the process
computed by $R$. Let node $v$ be the $b$-child of $v'$.  Then, if $i$
is a bit position in $B(v)$, we define the position $Z(v, v', i)$, in
$B(v')$, as $\fselect{B(v')}{b}{i+1}$. In general, when $v'$ is an
ancestor of $v$, the notation $Z(v, v', i)$ represents the iteration of
this process. For a general sequence, $\select$ can be computed by
this process, as summarized by the equation $ \fselect{S}{c}{i} =
Z(V(c), \raiz, i-1)$.

\begin{lemma}[\cite{GGV03,MN07,GGGRR07}]
  The wavelet tree for a sequence $S[0,n)$ over alphabet
  $\Sigma = [0,\sigma)$ requires at most $n\log\sigma + o(n)$
  bits of space.\footnote{From now on the space will be measured in bits and
$\log$ will be to the base 2.}
  It solves \rank, \select, and access to any $S[i]$ in time $O(\log\sigma)$.
\end{lemma}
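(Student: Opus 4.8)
The plan is to establish the space bound and the time bounds separately, treating the wavelet tree as a complete binary tree over $\Sigma=[0,\sigma)$ of height $\lceil\log\sigma\rceil$.

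\textbf{Space.} First I would argue that the total length of all bitmaps $B(v)$, summed over the nodes $v$ at any fixed depth $d$, is exactly $n$ (each symbol $S[i]$ is tracked through exactly one node at each level, namely $V(S[i]\{..d\})$), except that leaves store no bitmap. Hence the bitmaps contribute at most $n\lceil\log\sigma\rceil \le n\log\sigma + n$ bits in the naive counting; but by using, at each node, a compact bitmap representation supporting $O(1)$-time \rank/\select\ (e.g. the structures of \cite{GGGRR07} referenced in the lemma), the redundancy over the $n\log\sigma$ ``payload'' bits is $o(n)$ in total. The one subtlety to spell out is that a complete binary tree with $\sigma$ leaves has $O(\sigma)$ nodes, and $\sigma$ can in principle exceed $n$; I would note that empty bitmaps are not stored explicitly and that the $O(\sigma)$ pointers are absorbed into the $o(n)$ term under the standard assumption $\sigma = O(n)$ (or, more carefully, by a pointerless/succinct-tree layout so that no per-node overhead is incurred). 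This yields the claimed $n\log\sigma + o(n)$ bits.

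\textbf{Time.} For \rank, \select, and access I would simply trace the three procedures already described in the text. Access to $S[i]$ follows the root-to-leaf path choosing the $b$-child with $b=B(v)[i]$ at each node and applying the reduction $R(v,b,i)=\frank{B(v)}{b}{i}-1$; this is $\lceil\log\sigma\rceil$ steps, each costing $O(1)$ by the constant-time \rank\ on the compact bitmaps, and the symbol is read off from the leaf label, for $O(\log\sigma)$ total. For $\frank{S}{c}{i}$ I would invoke the identity $\frank{S}{c}{i} = R(\raiz,c,i)+1$ already stated: descend along the binary representation of $c$, applying one \frank{B(v)}{b}{i} per level, again $O(\log\sigma)$ time. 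For $\fselect{S}{c}{i}$ I would use $\fselect{S}{c}{i} = Z(V(c),\raiz,i-1)$: start at leaf $V(c)$ and climb to the root, at each step replacing the current position $j$ in $B(v)$ by $\fselect{B(v')}{b}{j+1}$ in the parent $v'$, where $v$ is the $b$-child of $v'$; this is again $\lceil\log\sigma\rceil$ steps of $O(1)$ each.

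\textbf{Main obstacle.} None of this is deep — the statement is essentially a repackaging of known wavelet-tree folklore — so the only real care needed is (i) justifying the $o(n)$ redundancy claim, which reduces to citing a compact bitmap index whose redundancy is $o(\text{length})$ and checking that summing these redundancies over all levels still gives $o(n)$ (it does, since there are $O(\log\sigma)$ levels and one can choose a bitmap representation with redundancy $o(\ell/\log\ell)$ or use a single global representation), and (ii) handling the case $\sigma > n$ cleanly so the $O(\sigma)$ tree nodes do not break the space bound — which I would dispatch by the pointerless layout remark above or by the standard convention that the alphabet is first reduced to the effective one of size $\le n$. I expect the write-up to be short and to lean entirely on the constructions cited in the lemma's attribution.
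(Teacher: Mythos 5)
Your approach — accounting for the bitmap bits level by level, adding the redundancy of the constant-time \rank/\select\ indexes, and then discharging the tree-structure overhead — mirrors what the paper does, except the paper presents it as a chain of citations: Grossi~\ai~\cite{GGV03} give $n\log\sigma + O(\frac{n\log\sigma\log\log n}{\log n}) + O(\sigma\log n)$ bits, M\"akinen and Navarro \cite{MN07} cut the last term to $O(\log\sigma\log n)=o(n)$ by using one pointer per level, and Golynski~\ai~\cite{GGGRR07} reduce the per-bitmap redundancy to $O(n\log\log n/\log^2 n)$. Your time analysis is exactly right and matches the text's descriptions of $R$ and $Z$.

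Two of your space remarks need fixing, though. First, your ``naive counting'' gives $n\lceil\log\sigma\rceil$, and you bound this by $n\log\sigma + n$; that extra $n$ is \emph{not} swallowed by the $o(n)$ in the lemma, and plain bitmaps with a small \rank/\select\ index cannot remove it — their raw storage \emph{is} $n\lceil\log\sigma\rceil$ bits. What actually closes the gap in the cited construction is entropy-compressed bitmaps: the sum of the zero-order entropies of the $B(v)$ over all levels telescopes to $nH_0(S)\le n\log\sigma$ bits, and the rest is index redundancy. You should either invoke that telescoping entropy argument or state that $\log\sigma$ is being read as $\lceil\log\sigma\rceil$. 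Second, your claim that ``the $O(\sigma)$ pointers are absorbed into the $o(n)$ term under $\sigma=O(n)$'' is simply false: $O(\sigma)$ pointers occupy $O(\sigma\log n)$ bits, which is $\Theta(n\log n)$ when $\sigma=\Theta(n)$ — far from $o(n)$. The correct remedy is the one you offer parenthetically, namely the pointerless layout with one pointer per level (this is precisely M\"akinen and Navarro's contribution the paper cites), so that remark should be promoted from an aside to the main argument and the ``$\sigma=O(n)$ absorbs it'' remark dropped.
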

\begin{proof}
  Grossi \ai~\cite{GGV03} proposed a representation using $n\log\sigma$ $+
  O(\frac{n\log\sigma\log\log n}{\log n})$ $+ O(\sigma\log n)$
  bits. M\"akinen and Navarro showed how to use only one pointer per
  level, reducing the last term to $O(\log\sigma\log n) = O(\log^2 n)
  = o(n)$.  Finally, Golynski \ai~\cite{GGGRR07} showed how to support
  binary \rank\ and \select\ in constant time, while reducing the
  redundancy of the bitmaps to $O(n\log\log n/\log^2 n)$, which added
  over the $n\log\sigma$ bits gives $o(n)$ as well.
 \end{proof}

\subsection{Representation of Grids}
\label{sec:rect-repr}

Consider a set $\Pe$ of $n$ distinct two-dimensional points $(x,y)$ from a
universe $[0,U) \times [0,U)$. We map coordinates to rank space using a 
standard method \cite{Cha88,ABR00}: We store two sorted arrays $X$ and $Y$ 
with all the (possibly repeated) $x$ and $y$ coordinates, respectively. Then 
we convert any point $(x,y)$ into rank space $[0,n) \times [0,n)$ in time 
$O(\log n)$ using two binary searches. Range queries are also mapped to rank 
space via binary searches (in an inclusive manner in case of repeated values). 
This mapping time will be dominated by other query times.

Therefore we store the points of $\Pe$ on a $[0,n) \times [0,n)$ grid, with 
exactly one point per row and one per column. We regard this set as a sequence 
$S[0,n)$ and the grid is formed by the points $(i,S[i])$. Then we represent 
$S$ using a wavelet tree.

The space of $X$ and $Y$ corresponds to the bare point data and will
not be further mentioned; we will only count the space to store the points in
rank space, as usual in the literature. In \ref{app:succgrids} we 
show how we can 
represent this mapping into rank space so that, together with a wavelet tree 
representation, the total space is only $O(n)$ bits over the minimum given by 
information theory.

The information relative to a point $p_0 = (x_0, y_0)$ is usually
tracked from the $\raiz$ and denoted $R(\raiz, y_0\{..d\}, x_0)$. A
pair of points $p_0 = (x_0, y_0)$ and $p_1 = (x_1, y_1)$, where $x_0
\le x_1$ and $y_0 \le y_1$, defines a rectangle; this is the typical
query range we consider in this paper. Rectangles have an implicit
representation in wavelet trees, spanning $O(\log n)$ nodes
\cite{MN07}. The binary representation of $y_0$ and $y_1$ share a
(possibly empty) common prefix. Therefore the paths $V(y_0\{..d\})$
and $V(y_1\{..d\})$ have a common initial path and then split at some
node of depth $k$, \ie\ $V(y_0\{..d\}) = V(y_1\{..d\})$ for $d \leq k$
and $V(y_0\{..d'\}) \neq V(y_1\{..d'\})$ for $d' > k$. Geometrically,
$V(y_0\{..k\})=V(y_1\{..k\})$ corresponds to the smallest horizontal
band of the form $[j\cdot n/2^k,(j+1)\cdot n/2^k)$ that contains the
query rectangle $Q$, for an integer $j$.  For $d' > k$ the nodes
$V(y_0\{..d'\})$ and $V(y_1\{..d'\})$ correspond respectively to
successively thinner, non-overlapping bands that contain the
coordinates $y_0$ and $y_1$.

Given a rectangle $Q = [x_0,x_1] \times [y_0,y_1]$ we consider the nodes
$V(y_0\{..d\}\cdot 1)$ such that $y_0\{..d\}\cdot 1 \neq y_0\{..d+1\}$, and the
nodes $V(y_1\{..d\}\cdot 0)$ such that $y_1\{..d\}\cdot 0
\neq y_1\{..d+1\}$. These nodes, together with $V(y_0)$ and $V(y_1)$,
form the implict representation of $[y_0,y_1]$, denoted 
$\imp$. The size of this set is $O(\log n)$. Let us recall a well-known
application of this decomposition.
\begin{lemma}
  Given $n$ two-dimensional points,
  the number of points inside a query rectangle $Q = [x_0,x_1] \times
  [y_0,y_1]$, $\conta$, can be computed in time $O(\log n)$
  with a structure that requires $n\log n +o(n)$ bits.
\label{lem:conta}
\end{lemma}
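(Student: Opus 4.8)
The plan is to use nothing beyond the wavelet tree of $S$ built in Section~\ref{sec:rect-repr}: since here the alphabet has size $\sigma = n$, the wavelet-tree lemma above already gives $n\log n + o(n)$ bits, so no auxiliary structure is required and the whole content of the proof is the $O(\log n)$ query algorithm.

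First I would recall the combinatorial fact underlying range counting on wavelet trees: for any node $v$, the positions of $B(v)$ are exactly the points of $\Pe$ whose $y$-coordinate lies in the horizontal band of $v$, listed by increasing $x$-coordinate. Hence, among the points of that band, the ones with $x\in[x_0,x_1]$ occupy a contiguous sub-interval of $B(v)$, whose endpoints are the reductions $\Rp(\raiz,v,x_0-1)$ and $\Rp(\raiz,v,x_1)$ (the $\frank{B}{b}{-1}=0$ convention of Section~\ref{sec:basics} takes care of empty column ranges). Summing the lengths of these sub-intervals over all $v\in\imp$ — the $O(\log n)$ nodes of the implicit decomposition of $[y_0,y_1]$ described just before the statement — yields $\conta$, because $\imp$ partitions the rows $[y_0,y_1]$.

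The key step is to evaluate this sum in $O(\log n)$ total time rather than the naive $O(\log^2 n)$ obtained by recomputing a reduction from $\raiz$ for each node of $\imp$. I would do a single descent that maintains, at the current node $v$, the pair $\bigl(\Rp(\raiz,v,x_0-1),\,\Rp(\raiz,v,x_1)\bigr)$, advanced by two binary-$\rank$ operations (hence $O(1)$) when moving to a child. We follow the two root-to-leaf paths toward $V(y_0)$ and $V(y_1)$; they coincide down to the split node $V(y_0\{..k\})=V(y_1\{..k\})$ and then separate. On the $y_0$-path below the split, each time the path branches to a $0$-child, the $1$-child is a node of $\imp$ lying wholly inside $[y_0,y_1]$: we add the length of the currently tracked interval pushed into that sibling (one extra $\rank$ on $B(v)$, i.e.\ $O(1)$) and do \emph{not} enter it; branching to a $1$-child contributes nothing. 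The $y_1$-path is symmetric with the roles of $0$ and $1$ exchanged. Finally the leaves $V(y_0)$ and $V(y_1)$ are themselves single-row bands inside $Q$ and contribute the lengths of their tracked intervals. Only $O(\log n)$ nodes are visited, each at cost $O(1)$, and the mapping of $x_0,x_1,y_0,y_1$ to rank space costs $O(\log n)$ as well and is absorbed, so the query runs in $O(\log n)$ time.

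The main obstacle is not conceptual but bookkeeping: keeping the inclusive/exclusive boundaries and the $-1$ offsets consistent so that an empty column range and the degenerate case $y_0=y_1$ (where the two paths never separate and $V(y_0)=V(y_1)$ must be charged once) are handled correctly, and verifying that the sibling nodes charged along the two paths together with the two leaves are exactly the nodes of $\imp$, with no overlaps and no omissions — which is precisely the decomposition property recalled immediately before the statement.
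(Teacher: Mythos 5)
Your proof is correct and follows essentially the same route as the paper's: the same formula $\sum_{v\in\imp}\Rp(\raiz,v,x_1)-\Rp(\raiz,v,x_0-1)$, with the reductions for $x_1$ and $x_0-1$ computed incrementally along the two root-to-leaf paths so that each of the $O(\log n)$ nodes in $\imp$ costs $O(1)$. You simply spell out the bookkeeping (siblings charged below the split, the degenerate case $y_0=y_1$) that the paper leaves implicit.
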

\begin{proof}
The result is 
$ 
\sum_{v \in \imp} \Rp(\raiz, v, x_1) - \Rp(\raiz, v, x_0-1).
$ 
Notice that all the values $R(\raiz, y_0\{..d\}, x)$ and $R(\raiz,
y_1\{..d\}, x)$ can be computed sequentially, in total time $O(\log n)$,
for $x=x_1$ and $x=x_0-1$. For a node $v \in \imp$ the desired difference can 
be computed from one of these values in time $O(1)$. Then the lemma follows.
 \end{proof}

This is not the best possible result for this problem (a better result 
by Bose et al.~\cite{BHMM09} exists), but it
is useful to illustrate how wavelet trees solve range search problems.

\section{Geometric Queries}
\label{sec:geometric-queries}
In this section we use wavelet trees to solve, within succinct space, two
geometric problems of relevance for data analysis.
\subsection{Dominating Points}
Given points $p_0 = (x_0, y_0)$ and $p_1 = (x_1, y_1)$, we say that
$p_0$ {\em dominates} $p_1$ if $x_0 \ge x_1$ and $y_0 \ge y_1$. Note that
one point can dominate the other even if they coincide in one coordinate.
Therefore, for technical convenience, in the reduction described in
Section~\ref{sec:rect-repr}, points with the same $y$ coordinates must be 
ranked in $Y$ by increasing $x$ value, and points with the same $x$ coordinates
must be ranked in $X$ by increasing $y$ value.
A point is {\em dominant} inside a range if there is no other point in
that range that dominates it. In Fig.~\ref{fig:wvtDiv} we define
the cardinal points N, S, E, W, SW, etc. We first use wavelet
trees to determine dominant points within rectangles.
\begin{theorem}
  Given $n$ two-dimensional points, the $d$ dominating points inside a
  rectangle $Q = [x_0,x_1] \times [y_0,y_1]$ can be obtained (in NW to
  SE order) in time $O((d+1)\log n)$, with a data structure using $n \log n +
  o(n)$ bits.
\label{thm:4-sided-range}
\end{theorem}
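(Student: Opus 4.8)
The plan is to walk down the implicit representation $\imp$ of the band $[y_0,y_1]$ and report dominant points in a recursive, pruning fashion, exploiting the fact that among the points falling in a horizontal sub-band, a point can only be dominant if it is not dominated by another point already lying in a strictly higher sub-band at the same or larger $x$-coordinate. First I would observe that the dominating points of $Q$, listed in NW-to-SE order, have strictly decreasing $y$ and strictly increasing $x$; so it suffices to process the $O(\log n)$ nodes that compose $\imp$ from the one containing $y_1$ downward toward the one containing $y_0$ (\emph{i.e.}, from North to South), maintaining a ``staircase frontier'' given by the largest $x$-coordinate of a dominant point reported so far, call it $x^\*$ (initialised to $x_0-1$). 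For each node $v\in\imp$, in North-to-South order, I want to extract those points of $Q$ lying in the band of $v$ whose $x$-coordinate exceeds $x^\*$, in East-to-West order, and each such point is guaranteed to be dominant; after reporting it I update $x^\*$ to its $x$-coordinate.

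The key step is therefore: given a wavelet-tree node $v$, and an interval $[x^\*+1,x_1]$ of the $x$-range, enumerate the points of $S(v)$ whose original $x$-coordinate lies in that interval, from largest $x$ down, each in $O(\log n)$ amortised time. Using the reduction machinery of Section~\ref{sec:rect-repr}, the interval $[x^\*+1,x_1]$ maps to a contiguous interval $[\Rp(\raiz,v,x^\*),\Rp(\raiz,v,x_1)]$ of positions in $B(v)$ (using that $x$-coordinates are a permutation of $[0,n)$, so one point per column); if this interval is empty we skip $v$. Otherwise we take its last position $i$, which corresponds to the point of $v$'s band with the largest $x$ in range; we recover its actual $x$-coordinate by the upward walk $Z(v,\raiz,i)$ in $O(\log n)$ time, report it, set $x^\*$ to that value, and continue. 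Crucially, once $x^\*$ has advanced past $x_1$ within the portion of the band still to be scanned, we move to the next node; and since each reported point costs $O(\log n)$ and there are exactly $d$ of them, plus $O(\log n)$ nodes each costing $O(\log n)$ for the initial interval computation (which, as in Lemma~\ref{lem:conta}, can be shared across the $\imp$ traversal), the total is $O((d+1)\log n)$.

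The correctness argument is the heart of it: I must show (i) every point enumerated this way is dominant in $Q$, and (ii) every dominant point is enumerated. For (i), a point $p$ taken from node $v$ with $x$-coordinate $x(p)>x^\*$: any point $q\in Q$ dominating $p$ has $y(q)\ge y(p)$ and $x(q)\ge x(p)>x^\*$; if $y(q)$ lies in a strictly higher sub-band it would have been reported before $p$ and would have pushed $x^\*\ge x(q)\ge x(p)$, contradiction; if $q$ is in the same band $v$ with $x(q)\ge x(p)$, then since we scan $v$'s in-range positions from the largest $x$ downward and $x(p)$ was the current largest in range, we get $q=p$. For (ii), suppose $p$ is dominant but never reported; it lies in some band $v\in\imp$; at the moment we process $v$, the frontier $x^\*$ is the max $x$ among dominant points in strictly higher bands, and since $p$ is not dominated by any of those, $x(p)>x^\*$, so $p$'s position is in the scanned interval of $B(v)$ and will eventually be picked (points with larger $x$ in the same band that are picked before $p$ are themselves dominant by (i), hence by the same argument do not dominate $p$, so they never cause us to skip $p$). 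The main obstacle I anticipate is the bookkeeping that lets the per-node ``compute the mapped interval'' cost be charged globally rather than $O(\log n)$ per node per reported point — this is handled exactly as in Lemma~\ref{lem:conta}, by computing all the values $\Rp(\raiz,y_0\{..d\},x)$ and $\Rp(\raiz,y_1\{..d\},x)$ along the two root-to-leaf paths once, and refreshing only the single frontier coordinate $x^\*$ after each report, which again costs $O(\log n)$ but is absorbed into the $O((d+1)\log n)$ budget because it happens $d$ times. The space bound is immediate: we only use the wavelet tree of Lemma~\ref{lem:conta}, namely $n\log n+o(n)$ bits, with no auxiliary structures.
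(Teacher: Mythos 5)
Your proposal goes wrong at the ``key step.'' You treat each node $v\in\imp$ as an indivisible band and try to enumerate that band's dominant points directly from the bitmap $B(v)$, taking positions of the mapped interval from right to left (largest $x$ first) and updating the frontier $x^*$ after each report. This cannot work, because $B(v)$ is ordered by $x$ only and carries no information about the relative $y$-order of the points inside the band; but a band attached to a shallow node of $\imp$ may contain many points with wildly different $y$-values and hence several dominant points whose mutual dominance relations are determined precisely by those $y$-comparisons. Concretely: suppose the band of some $v\in\imp$ contains points $(5,10)$ and $(3,20)$ and nothing above dominates either. Your scan picks the rightmost in-range position, reports $(5,10)$, and sets $x^*=5$; the refreshed interval $(\Rp(\raiz,v,x^*),\Rp(\raiz,v,x_1)]$ is now empty, so you move to the next band and never report $(3,20)$, which is dominant. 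Your own step (ii) asserts that points with larger $x$ picked earlier in the same band ``never cause us to skip $p$,'' but your algorithm does exactly that: after reporting such a point you raise $x^*$ above $x(p)$, which excludes $p$ from the refreshed interval. (If instead you do \emph{not} refresh the interval within the band, you would report every in-range point of the band, including ones dominated by a larger-$x$, larger-$y$ neighbour in the same band, which is equally wrong.) A secondary symptom of the same problem is the output order: E-to-W within a band is S-to-N, which clashes with the N-to-S sweep over bands and with the NW-to-SE order required by the statement.

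The missing idea, which is the core of the paper's proof, is that you must recurse \emph{into} the subtree of each $v\in\imp$ rather than read off $B(v)$ at that level. The paper performs a depth-first search below each such $v$ (processing the nodes of $\imp$ from $V(y_1)$ leftward), always visiting the $1$-child before the $0$-child so that leaves are reached in N-to-S order; at every visited node it tracks the interval $(\Rp(\raiz,v',x^{(i)}),\Rp(\raiz,v',x_1)]$ (with $x^{(i)}$ the last reported $x$, initially $x_0-1$) and prunes any subtree whose interval is empty. Descending to a leaf exposes both coordinates and guarantees, by the pruning rule, that the point is dominant; conversely every non-pruned subtree must contain at least one dominant point, which is what makes the total cost amortize to $O((d+1)\log n)$. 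Your frontier idea $x^*$ is essentially the paper's $x^{(i)}$, and the accounting you sketch for sharing the $\Rp$ computations across $\imp$ is fine, but without the DFS into the subtrees the algorithm is simply incorrect.
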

\begin{proof}
Let $v \in \imp$ be nodes in the implicit representation of $[y_0,y_1]$. We 
perform depth-first searches (DFS) rooted at each $v\in\imp$, starting from 
$V(y_1)$ and continuing sequentially to the left until
$V(y_0)$. Each such DFS is computed by first visiting the $1$-child and then 
the $0$-child. As a result, we will find the points in N to S order. 

We first describe a DFS that reports all the
nodes, and then restrict it to the dominant ones. Each
visited node $v'$ tracks the interval $(\Rp(\raiz, v', x_0-1),
\Rp(\raiz, v', x_1)]$. If the interval is empty we skip the subtree below
$v'$. As the grid contains only one point per row, for leaves $v'=v^{(i)}$ the 
intervals $(\Rp(\raiz, v^{(i)}, x_0-1), \Rp(\raiz, v^{(i)}, x_1)]$ contain at 
most one value, corresponding to a point $p^{(i)} \in \Pe\cap Q$. 
Then $x^{(i)} = Z(v^{(i)},\raiz,\Rp(\raiz, v^{(i)}, x_1))$ and $p^{(i)} =
(x^{(i)}, S[x^{(i)}])$. Reporting $k$ points in this way takes $O((k+1)
\log n)$ time.

By restricting the intervals associated with the nodes we obtain only
dominant points. 
In general, let $v'$ be the current node, that is either in $\imp$ or
it is a descendant of a node in $\imp$, and let $(x^{(i)},S[x^{(i)}])$
be the last point that was reported. Instead of considering the
interval $(\Rp(\raiz, v, x_0-1), \Rp(\raiz, v, x_1)]$, consider the
interval $(\Rp(\raiz, v, x^{(i)}), \Rp(\raiz, v, x_1)]$. This is correct
as it eliminates points $(x,y)$ with $x < x^{(i)}$, and also 
$y < S[x^{(i)}]$, given the N to S order in which we deliver the points.

As explained, a node with an empty interval is skipped.  On the other
hand, if the interval is non-empty, it must produce at least one
dominant point. Hence the cost of reporting the $d$ dominant points
amortizes to $O((d+1)\log n)$.
 \end{proof}

\begin{figure}[ht]
\begin{minipage}[b]{0.48\linewidth}
\centering
  \input{division-4.tex}
  \caption{Dominance on wavelet tree coordinates. The grayed points dominate
all the others in the rectangle. We also show the 4 directions.}
  \label{fig:wvtDiv}
\end{minipage}
\hspace{0.5cm}
\begin{minipage}[b]{0.48\linewidth}
\centering
  \input{division-1.tex}
  \caption{Rectangle visibility. For SW visibility the problem is the same as
dominance. We grayed the points that are visible in the other 3 directions.}
  \label{fig:wvtDiv2}
\end{minipage}
\end{figure}

\subsection{Rectangle Visibility}

Rectangle visibility is another, closely related, geometric problem. A
point $p \in \Pe$ is {\em visible} from a point $q = (x_0,y_0)$, not necessarily
in $\Pe$, if the rectangle defined by $p$ and $q$ as diagonally opposite
corners does not contain any point of $\Pe$. Depending on the direction from
$q$ to $p$ the visibility is called SW, SE, NW or NE (see 
Fig.~\ref{fig:wvtDiv2}). Next we solve visibility as variant of dominance.

\begin{theorem}
  The structure of Thm.~\ref{thm:4-sided-range} can compute the $d$ points 
  that are visible from a query point $q = (x_0, y_0)$, in order, in time 
  $O((d+1)\log n)$.
\label{thm:rectangle-visibility}
\end{theorem}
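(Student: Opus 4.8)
The plan is to reduce each of the four visibility directions to the dominance
problem of Theorem~\ref{thm:4-sided-range}, run on a suitable quadrant anchored
at $q$ and, for three of the directions, with the roles of ``large'' and
``small'' swapped along one or both coordinate axes, so that no new data
structure is needed beyond the one of Theorem~\ref{thm:4-sided-range}.

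First I would observe that SW visibility is literally the dominance problem: a
point $p=(x,y)$ with $x\le x_0$ and $y\le y_0$ is visible from $q$ in the SW
direction iff the rectangle $[x,x_0]\times[y,y_0]$ contains no other point of
$\Pe$, which is exactly the condition that $p$ is a dominant point inside
$Q=[0,x_0]\times[0,y_0]$. Hence we invoke Theorem~\ref{thm:4-sided-range} with
this $Q$ and report the $d$ SW-visible points in NW-to-SE order within
$O((d+1)\log n)$ time. For the other three directions the same reasoning shows
that NE/NW/SE visibility from $q$ coincides with being ``dominant'' inside the
quadrant $[x_0,n-1]\times[y_0,n-1]$, $[0,x_0]\times[y_0,n-1]$,
$[x_0,n-1]\times[0,y_0]$ respectively, where now domination is taken with
respect to the order preferring small~$x$ and small~$y$ for NE, large~$x$ and
small~$y$ for NW, and small~$x$ and large~$y$ for SE (a short check of the
rectangle $[\min(x,x_0),\max(x,x_0)]\times[\min(y,y_0),\max(y,y_0)]$ in each
case). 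I would then argue that the DFS in the proof of
Theorem~\ref{thm:4-sided-range} supports all four orders on the \emph{same}
wavelet tree: flipping the $y$-direction amounts only to visiting the
$0$-child before the $1$-child at every node and scanning the $O(\log n)$ nodes
of the implicit representation of the $y$-range from bottom to top instead of
from top to bottom; flipping the $x$-direction amounts only to using the last
reported abscissa $x^{(i)}$ as a new \emph{right} boundary, i.e.\ replacing the
interval $(\Rp(\raiz,v,x^{(i)}),\Rp(\raiz,v,x_1)]$ by
$(\Rp(\raiz,v,x_0-1),\Rp(\raiz,v,x^{(i)}-1)]$. In each combination the
restriction stays monotone, a node reached with a non-empty restricted interval
still yields at least one new visible point, the points come out in the
analogous scan order for each direction, and the $O((d+1)\log n)$ amortized
bound carries over verbatim; the mapping of $q=(x_0,y_0)$ to rank space costs
$O(\log n)$ and is absorbed.

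The step needing the most care is re-verifying the amortization argument of
Theorem~\ref{thm:4-sided-range} for each of the three reflected variants: one
must check, for the swapped order, that once a point $(x^{(i)},S[x^{(i)}])$ has
been reported, any later candidate lying on the ``wrong side'' of $x^{(i)}$ is
genuinely dominated by it given the chosen $y$-scan direction, so that skipping
empty restricted subtrees loses no visible point and every non-empty restricted
subtree can be charged to a freshly reported one. This is a case analysis
mirroring the NW-to-SE one already carried out, with the inequalities flipped
according to the signs of the selected order; I expect no conceptual
difficulty, only the bookkeeping of the four sign patterns. I would present
the SW case in full and then state the other three as symmetric, making the
reflections of the DFS explicit.
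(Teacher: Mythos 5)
Your high-level plan---run the dominance DFS of Theorem~\ref{thm:4-sided-range} on the quadrant anchored at $q$, reflecting the traversal order along the $x$- and/or $y$-axis so that the \emph{same} wavelet tree serves all four directions---is the approach the paper takes. The SW case matches verbatim, your $y$-reflection (visit $0$-children first, scan $\impN$ from $V(y_0)$ outward) matches the paper's, and your $x$-reflection by restricting the interval from the other end is essentially equivalent to the paper's reformulation of $R$ via $R(v,b,i)=\frank{B(v)}{b}{i-1}+1$.

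There is, however, a genuine gap in the NW and SE cases that you dismiss as sign bookkeeping but that actually breaks correctness and requires a dedicated mechanism. The rank-space reduction of Section~\ref{sec:rect-repr} fixes a tie-breaking convention: equal $y$-values are ranked in $Y$ by increasing $x$, equal $x$-values are ranked in $X$ by increasing $y$. This convention is consistent with the SW order (and, since NE flips \emph{both} axes, also with NE), but for NW and SE the two axes are reflected inconsistently. Concretely, take two points with the same original $y$; the rank-space map gives the one with larger $x$ the larger $y$-rank. Under the NW order (large $x$, small $y$), this point should block the other, but in rank space it now has a \emph{larger} $y$-rank, so the DFS restriction never sees the blocking and the algorithm reports a hidden point. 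Your key invariant---``a node reached with a non-empty restricted interval still yields at least one new visible point''---thus fails exactly when ties occur, and the amortization does not ``carry over verbatim.'' The paper repairs this by binary-searching $X$ and $Y$ at reporting time for the extreme tied positions $x'$, $y'$, correcting $p^{(i)}$ to $(x',y)$ (resp.\ $(x,y')$), restricting subsequent searches by $x'$ rather than $x^{(i)}$, and additionally skipping later subtrees of $\impN$ until the $y$-values pass $y'$; the extra binary search is $O(\log n)$ per reported point, so the stated bound survives. Some such fix is needed to make your NW and SE cases sound.
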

\begin{proof}
Note that SW visibility corresponds precisely to
determining the dominant points of the region $[0, x_0] \times
[0, y_0]$. Hence we use the procedure in
Thm.~\ref{thm:4-sided-range}. We now adapt it to the three
remaining directions without replicating the structure. 

For NE or SE visibility, we change the definition of 
operation $R$ to $R(v,b,i) = \frank{B(v)}{b}{i-1}+1$, thus
$\frank{S}{c}{i} = R(\raiz, c, i)+1$ if
$S[i] = c$ and $\frank{S}{c}{i} = R(\raiz, c, i)$ otherwise. In this
case we track the intervals $[\Rp(\raiz, v', x_0), \Rp(\raiz, v', x_1+1))$.
This $R(\raiz,v',x_1+1)$ is replaced by $R(\raiz,v',x^{(i)})$ when restricting
the search with the last dominant point.

For NE or NW visibility, the DFS searches first visit the $0$-child and then 
use the resulting points to restrict the search on the visit to
the 1-child, moreover they first visit the node $V(y_0)$ and
move to the right. 

Finally, for NW or SE visibility our point ordering in presence of ties
in $X$ or $Y$ may report points with the same $x$ or $y$ coordinate. To avoid
this we detect ties in $X$ or $Y$ at the time of reporting, right after
determining the pair $p^{(i)} = (x,y) = (x^{(i)},S[x^{(i)}])$. In the NW (SE)
case, we binary search for the last (first) positions $x'$ such that $X[x'] = 
X[x]$ and $y'$ such that $Y[y'] = Y[y]$. Then we correct $p^{(i)}$ to $(x',y)$
(to $(x,y')$). The subsequent searches are then limited by $x'$ instead of 
$x=x^{(i)}$. We also limit subsequent searches in a new way: we skip traversing
subsequent subtrees of $\imp$ until the $y$ values are larger (NW) or smaller 
(SE) than $y'$. Still the cost per reported point is $O(\log n)$.
 \end{proof}

\subsection{Dynamism}

 We can support point insertions and deletions on a fixed $U \times U$ grid.
 Dynamic variants of the bitmaps stored at each wavelet tree node raise the
 extra space to $o(\log U)$ per point and multiply the times 
 by $O(\log n /\log\log n)$ \cite{HM10,NS10}. 

 \begin{lemma}
   Given $n$ points on a $U \times U$ grid, there is a structure using
   $n\log U + o(n\log U)$ bits, answering 
   queries in time $O(t(\log U)\log n / \log\log n)$, where $t(h)$ is the time 
   complexity of the query using static wavelet trees of height $h$. It handles
   insertions and deletions in time $O(\log U \log n / \log\log n)$.
   \label{lem:dyn-dom}
 \end{lemma}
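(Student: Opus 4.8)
The plan is to run any static wavelet-tree query \emph{verbatim} on a dynamic wavelet tree built over the $y$-universe $[0,U)$ (hence of height $\log U$), paying only a $O(\log n/\log\log n)$ slowdown per elementary step because each static bitmap is replaced by a dynamic one. Concretely, we build the wavelet tree of Section~\ref{sec:basics} over $[0,U)$ rather than over $[0,n)$: the points are still kept as the sequence $S$ sorted by $x$-coordinate (ties broken as in Section~\ref{sec:rect-repr}), $(i,S[i])$ being the $i$-th point, but now every node sits at depth at most $\log U$. The bitmap $B(v)$ at each node is stored with a dynamic-bitmap data structure \cite{HM10,NS10} supporting \rank, \select, and the insertion or deletion of a bit at an arbitrary position in time $O(\log\ell/\log\log\ell)\subseteq O(\log n/\log\log n)$ on a bitmap of current length $\ell\le n$, and with redundancy $o(\ell)$; these structures also maintain the tree topology (existence of nodes and parent/child navigation). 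Since the bitmaps on each of the $\le\log U$ levels have total length $n$, the whole structure occupies $n\log U + o(n\log U)$ bits, i.e. $o(\log U)$ extra bits per point. The sorted multiset of stored $x$-coordinates is kept in an auxiliary dynamic order-statistics structure answering rank, insert and delete in $O(\log n)$ time; its space is the coordinate-mapping cost that the statement, like the rest of the paper, does not charge.

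\emph{Queries.} Every wavelet-tree algorithm of Section~\ref{sec:geometric-queries} proceeds as a sequence of primitives $R(v,b,\cdot)$ (a \rank) and $Z(v,v',\cdot)$ (a \select) on node bitmaps, with $O(1)$ arithmetic between consecutive primitives; by definition a static run on a height-$h$ tree issues $O(t(h))$ of them. On the dynamic structure each primitive costs $O(\log n/\log\log n)$ and the initial mapping of the query rectangle to rank space costs $O(\log n)$, for a total of $O(t(\log U)\log n/\log\log n)$.

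\emph{Updates.} To insert a point $(x,y)$: obtain from the auxiliary structure the rank $r$ of $x$ among the stored $x$-coordinates and insert $x$ there; then set $j\leftarrow r$ and, for $d=0,1,\dots,\log U-1$, at the node $v=V(y\{..d\})$ insert at position $j$ of $B(v)$ the bit $b$ that is the $(d+1)$-th bit of $y$, update $j\leftarrow\frank{B(v)}{b}{j}-1$, and descend to the $b$-child (creating it within the dynamic topology if needed). This is one dynamic-bitmap insertion on each of $\le\log U$ levels, i.e. $O(\log U\log n/\log\log n)$ time; deletion is symmetric, removing the tracked bit level by level.

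\emph{Main obstacle.} The only non-routine point is keeping the \emph{total} redundancy in $o(n\log U)$: a naive pointer per wavelet-tree node costs $\Theta\!\left(n\log(U/n)\cdot\log(n\log U)\right)$ bits, which is $\omega(n\log U)$ when $U\gg n$, since there can be $\Theta(n+n\log(U/n))$ non-empty nodes. Representing the dynamic tree and its bitmaps with $o(1)$ relative redundancy and without per-node pointers is precisely what \cite{HM10,NS10} provide, so we invoke them as a black box; everything else is the bookkeeping above.
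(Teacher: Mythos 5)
Your proposal is correct and follows essentially the same route as the paper: build the wavelet tree to full height $\log U$ over the $y$-universe (so $Y$ need not be stored), keep only $X$ in a dynamic rank/select structure, and replace each static bitmap by a dynamic one from \cite{HM10,NS10}, so every primitive incurs a $O(\log n/\log\log n)$ slowdown and an update touches one bitmap per level. The only presentational difference is that the paper is more explicit about $X$: it uses a B-tree of arity $\Theta(\log U)$ with the space-management technique of \cite{Mun86}, so that the overhead beyond the bare coordinates is only $O(n+\log^2 U)$ bits (handling separately the corner case $n=o(\log U)$), whereas you fold this into the ``coordinate-mapping cost not charged''; either accounting is acceptable given the paper's conventions, and your remark on avoiding per-node pointers to keep redundancy $o(n\log U)$ is exactly the reason the cited dynamic-wavelet-tree representations are invoked as a black box.
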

 \begin{proof}
 We use the same data structure and query algorithms of 
the static wavelet trees described in Section~\ref{sec:rect-repr},
 yet representing their bitmaps with the dynamic variants \cite{HM10,NS10}. 
 We also maintain vector $X$, but not $Y$; we use the $y$-coordinates 
 directly instead since the wavelet tree handles repetitions in $y$.
Having a wavelet tree of depth $\log U$ makes the time $t(\log U)$, whereas
using dynamic bitmaps multiplies this time by $O(\log n / \log\log n)$, 

Instead of an array, we use for $X$ a B-tree tree with arity $\Theta(\log U)$.
Nodes are handled with a standard technique for managing cells of different 
sizes \cite{Mun86}, which wastes just $O(\log^2 U)$ bits in total. As a result,
the time for accessing a position of $X$ or for finding the range of elements 
corresponding to a range of coordinates is $O(\log U)$, which is subsumed by 
other complexities. The extra space on top of that of the bare coordinates is 
$O(n + \log^2 U)$ bits. This is $o(n\log U)$ unless $n = o(\log U)$, in which
case we can just store the points in plain form and solve all queries 
sequentially. It is also easy to store differentially encoded coordinates in
this B-tree to reduce the space of mapping the universe of $X$ coordinates
to the same achieved in Section~\ref{sec:rect-repr}.

When inserting a new point $(x,y)$, apart from inserting $x$ into $X$, we 
track the point downwards in the wavelet tree, doing the insertion at each of 
the $\log U$ bitmaps. Deletion is analogous.
 \end{proof}

As a direct application, dominance and visibility queries can be solved in the
dynamic setting in time $O((d+1)\log U \log n / \log\log n)$, while supporting
point insertions and deletions in time $O(\log U \log n / \log\log n)$. 
The only issue is that we now may have several points with the same 
$y$-coordinate, which is noted when the interval is of size more than one 
upon reaching a leaf. In this case, as these points are sorted by increasing 
$x$-coordinate, we only report the first one (E) or the last one (W). Ties in 
the $x$-coordinate, instead, are handled as in the static case.

\section{Range Sum, Average, and Variance}
\label{sec:sum}

We now consider points with an associated value given by an integer function 
$w:\Pe \rightarrow [0,W)$. 
We define the sequence of values $W(v)$ associated to each wavelet tree node 
$v$ as follows: If $S(v) =
p_0,p_1,\ldots, p_{|S(v)|}$, then $W(v) =w(p_0),w(p_1),\ldots,w(p_{|S(v)|})$.
%
%
We start with a solution to several range sum problems on groups.
In our results we will omit the bare $n \lceil \log W \rceil$ bits needed to
store the values of the points.

\begin{theorem}
  Given $n$ two-dimensional points with associated values in $[0,W)$,
  the sum of the point values inside a query rectangle $Q = [x_0,x_1] \times
  [y_0,y_1]$, $\soma$, can be computed in time 
  $O(\min(t\log W,\log n) t \log W \log n)$,
  with a structure that requires $n\log n (1+ 1/t)$
  bits, for any $t\ge 1$. It can also compute the average and
  variance of the values, $\avga$ and $\vara$ respectively.
  \label{thm:soma}
\end{theorem}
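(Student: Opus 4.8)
The plan is to enrich the wavelet tree of Section~\ref{sec:rect-repr} with partial-sum information aligned to the bitmaps, so that a rectangle query decomposes into $O(\log n)$ nodes of $\imp$ and at each node we can read off the sum of the values of the tracked points in $O(1)$ time, exactly as in Lemma~\ref{lem:conta}. Concretely, at each wavelet tree node $v$ I would like to store, aligned with $B(v)$, an array $C(v)$ of prefix sums of $W(v)$, i.e.\ $C(v)[i] = \sum_{j\le i} W(v)[j]$. Then $\soma = \sum_{v\in\imp} \big(C(v)[\Rp(\raiz,v,x_1)] - C(v)[\Rp(\raiz,v,x_0-1)]\big)$, and since the positions $\Rp(\raiz, y_0\{..d\}, x)$ and $\Rp(\raiz, y_1\{..d\}, x)$ for $x\in\{x_1, x_0-1\}$ are computed incrementally in $O(\log n)$ total time while descending, each node of $\imp$ contributes $O(1)$ work, giving $O(\log n)$ query time. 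The catch is space: storing full prefix sums costs $n\log W$ bits \emph{per level}, hence $\Theta(n\log n\log W)$ bits overall, far above the budget $n\log n(1+1/t)$.

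The key idea is therefore to \emph{sparsify} the stored sums: instead of storing $C(v)[i]$ for every position, store it only at every $\Delta$-th position (a sampled prefix sum), for a sampling step $\Delta$ to be fixed. To recover $C(v)[i]$ for a non-sampled $i$ we must sum up to $\Delta$ individual values $W(v)[j]$ between the last sample and $i$. But those individual values are not stored at $v$ either; they must be obtained by descending from $v$ to the appropriate leaves (each such descent costs $O(\log n)$ and each leaf yields one point value, read from the $n\lceil\log W\rceil$ bits of raw values we are not counting), or better, recursively from the sampled sums of $v$'s children. I would store the samples only at, say, every node whose depth is a multiple of some value, or — more in line with the parameter $t$ — store at each node a sampled array with step $\Delta = \Theta(t\log W / \log n \cdot \text{something})$; the space becomes $\sum_{\text{levels}} (n/\Delta)\log W = (\log n)\cdot(n\log W/\Delta)$ bits of redundancy, which we want to be $n\log n / t$, forcing $\Delta = \Theta(t\log W)$. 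With that choice, recovering one exact $C(v)[i]$ costs the resolution of up to $\Delta = O(t\log W)$ residual point values; done by descending the wavelet tree this is $O(t\log W\cdot\log n)$, but one can instead resolve them via the sampled sums one level down, paying $O(\min(t\log W,\log n))$ levels of recursion before the residual is small enough, and $O(t\log W)$ work per level — this is where the factor $O(\min(t\log W,\log n)\, t\log W)$ in the query time comes from, multiplied by the $O(\log n)$ nodes of $\imp$.

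For \avgaN, I would simply return $\soma / \conta$, computing $\conta$ with Lemma~\ref{lem:conta} at no asymptotic cost. For \varaN, I additionally maintain sampled prefix sums of the \emph{squared} values $w(p)^2 \in [0,W^2)$ in an analogous structure; then $\vara = \somac/\conta - (\soma/\conta)^2$ using the identity $\mathrm{Var} = E[X^2] - E[X]^2$. The squared-value structure costs $n\log(W^2) = 2n\log W$ bits of raw values (again not counted) and the same $n\log n/t$ redundancy up to constants, so the stated bounds absorb it. The main obstacle I anticipate is calibrating the sparsification so that the residual-resolution recursion genuinely terminates within $O(\min(t\log W,\log n))$ levels and contributes only $O(t\log W)$ per level — i.e.\ making precise how partially resolving a sum at node $v$ using the \emph{sampled} (not full) sums at its two children reduces the number of unresolved residual terms geometrically, so that the per-node cost is the claimed product rather than a naive $O(t\log W\log n)$. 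Everything else — the incremental computation of the $\Rp(\raiz,\cdot,x)$ positions, the decomposition into $\imp$, and the group arithmetic for average and variance — is routine given Lemmas~\ref{lem:conta} and the wavelet-tree machinery of Section~\ref{sec:basics}.
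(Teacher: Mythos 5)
Your high-level plan matches the paper: enrich the wavelet tree nodes with prefix-sum information aligned to the bitmaps, decompose the query over the $O(\log n)$ nodes of $\imp$, sparsify the sums at block granularity $\tau=t\log W$, and handle \avgaN\ via $\soma/\conta$ and \varaN\ via a second structure on $w^2$. But the key mechanism --- how to resolve the $O(\tau)$ residual values at the two block-boundary fragments of each $v\in\imp$ --- is precisely the part you flag as unresolved, and it is not a mere calibration issue: neither of your two candidate mechanisms yields the stated bound. Plan A (descend each residual to a leaf) costs $O(\log n)$ per residual, hence $O(\tau\log^2 n)$ overall; when $\tau<\log n$ this exceeds the claimed $O(\min(\tau,\log n)\,\tau\log n)=O(\tau^2\log n)$. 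Plan B (recurse into the children's sampled sums) does not terminate geometrically: the residual interval of length $<\Delta$ at $v$ splits into two sub-intervals at the children whose lengths sum to the same amount, and since the sampling grids of parent and children are unaligned, a sub-$\Delta$ interval at a child may contain no sample at all --- so one level of recursion need not shrink the unresolved work.

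The ingredient you are missing is that the paper stores, in addition to the sparsified partial-sum arrays at every node, the \emph{raw} value arrays $W(v)$ explicitly at every node whose height is a multiple of $\tau$ (including the leaves, which are free since they amount to storing each value once). This costs another $n\log n\cdot\log W/\tau = n(\log n)/t$ bits --- still within budget --- and guarantees that any single residual $w(p_i)$ can be recovered by tracking the point down at most $\min(\tau,\log n)$ levels to a node with an explicit $W$ array. That gives $O(\min(\tau,\log n))$ per residual, $O(\tau)$ residuals per node of $\imp$, and $O(\log n)$ such nodes, i.e.\ the stated $O(\min(t\log W,\log n)\,t\log W\log n)$. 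A smaller point worth noting: you budget $(n/\Delta)\log W$ bits per level for the samples, but a raw prefix sum ranges over $[0,nW)$ and needs $\log(nW)$ bits; the paper avoids this by encoding $W(v)$ as a unary bitmap (each $w(p_i)$ as $w(p_i)$ zeros followed by a one), stored compressed \`a la Okanohara--Sadakane, so that a sample is read off as a \select\ query and the sparsified structure costs only $(|S(v)|/\tau)(\log W + O(1))$ bits. For the variance you are also slightly more wasteful than needed: the paper observes that one need not store explicit $W'(v)=W(v)^2$ arrays at all, since those can be emulated on the fly from the $W(v)$ arrays; only the sampled partial-sum bitmaps $A'(v)$ for the squared values are added.
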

\begin{proof}
We enrich the bitmaps of the wavelet tree for $\Pe$. For each node $v$ we 
represent its vector $W(v) = w(p_0),w(p_1),\ldots,w(p_{|S(v)|})$ as a 
bitmap $A(v)$, where we concatenate the unary representation of the
$w(p_i)$'s, \ie\ $w(p_i)$ $0$'s followed by a $1$. These bitmaps $A(v)$ 
are represented in a compressed format~\cite{OS07} that requires at most $|S(v)| \log W + O(|S(v)|)$
bits. With this structure we can determine the sum $w(p_0) + w(p_1) +
\ldots +  w(p_i)$, \ie\ the partial sums, in constant time by means of 
$\fselect{A(v)}{1}{i}$ queries\footnote{Using constant-time $\select$ 
structures on their internal bitmap $H$ \cite{OS07}.}, $\wsum(v,i) = \,
$\fselect{A(v)}{1}{i+1}$-i$ is the sum of the first $i+1$ values.
In order to compute $\soma$ we use a formula similar to the one of
Lemma~\ref{lem:conta}:
\begin{equation}
 \sum_{v \in \imp} \wsum(v,\Rp(\raiz, v, x_1)) - \wsum(v,\Rp(\raiz, v, x_0-1)).
\label{eq:wsum}
\end{equation}

To obtain the tradeoff related to $t$, we call $\tau = t \log W$ and store 
only every $\tau$-th entry in 
$A$, that is, we store partial sums only at the end of {\em blocks} of $\tau$
entries of $W(v)$. We lose our ability to compute $\wsum(v,i)$ exactly, but 
can only compute it for $i$ values that are at the end of blocks,
$\wsum(v,\tau\cdot i) = \fselect{A(v)}{1}{i+1}-i$. To
compute each of the terms in the sum of Eq.~(\ref{eq:wsum}) we can use 
$\wsum(v,\tau\cdot\lfloor \Rp(\raiz, v, x_1)/\tau\rfloor) - 
 \wsum(v,\tau\cdot\lceil\Rp(\raiz, v, x_0-1)/\tau\rceil)$
to find the sum of the part of the range that covers whole blocks. Then we 
must find out the remaining (at most) $2\tau-2$ values $w(p_i)$ that lie at 
both extremes of the range, to complete the sum.

In order to find out those values, we store the vectors $W(v)$ explicitly
at all the tree nodes $v$ whose height $h(v)$ is a multiple of 
$\tau$, including the leaves. If a node $v \in \imp$ 
does not have stored its vector $W(v)$, it can
still compute any $w(p_i)$ value by tracking it down for at most $\tau$ levels.

As a result, the time to compute a $\soma$ query is $O(\tau^2\log n)$, yet it is
limited to $O(\tau \log^2 n)$ if $\tau > \log n$, as at worst we have $W(v)$
represented at the leaves.
The space for the $A(v)$ vectors is at most $(|S(v)|/\tau)(\log W + O(1))$ bits, 
which adds up to $n \log n (\log W+O(1))/\tau$ bits. On the other hand, the 
$W(v)$ vectors add up to $n \log n (\log W)/\tau = n(\log n) / t$ bits. 
This holds for any $t$
even when we store always the $W(v)$ vectors at the leaves: The space of those
$W(v)$ is not counted because we take for free the space needed to represent 
all the values once, as explained.


The average inside $Q$ is computed as $\avga = \soma
/\conta$, where the latter is computed with the same structure by just adding
up the interval lengths in $\imp$. To compute variance we use, conceptually, 
an additional instance of the same data structure, with values 
$w'(p) = w^2(p)$. Then $\vara = \somap/\conta - (\soma/\conta)^2$, where
$\somapN$ uses the values $w'$. Note that in fact we only need to store
additional (sampled) bitmaps $A'(v)$ corresponding to the partial sums of
vectors $W'(v)$ (these bitmaps may need twice the space of the $A(v)$ bitmaps
as they handle values that fit in $2\log W$ bits). Explicitly stored vectors 
$W'(v)$ are not necessary as they can be emulated with $W(v)$, and we can also 
share the same wavelet tree structures and bitmaps. This extra space fits
within the same $O(n (\log n)/t)$ bits.
 \end{proof}

\ref{app:variance} shows how to 
 further reduce the constant hidden in the $O$ notation. 
This is important because this constant is also associated with the
 $\lceil \log W \rceil$ bits of the weights, that are being
 omitted from the analysis: In the case of $w'$ we have $2 \lceil \log
 W \rceil$ bits per point. 

\paragraph{\bf\em Finite groups}
The solution applies to finite groups $(G,\oplus,^{-1},0)$.
We store $\wsum(v,i) = w(p_0) \oplus w(p_1) \oplus \ldots w(p_i)$,
directly using $\lceil \log |G| \rceil$ bits per entry. The terms
$\wsum(v,i)-\wsum(v,j)$ of Eq.~(\ref{eq:wsum}) are replaced by
$\wsum(v,j)^{-1} \oplus \wsum(v,i)$.

\subsection{Dynamism}
\label{sec:dynsum}

A dynamic variant is obtained by using the dynamic wavelet trees of
Lemma~\ref{lem:dyn-dom}, a dynamic partial sums data structure
instead of $A(v)$, and a dynamic array for vectors $W(v)$.

\begin{theorem}
  Given $n$ points on a $U \times U$ grid, with associated values in $[0,W)$,
  there is a structure that uses $n\log U (1 + o(1) + 1/t)$ bits, 
  for any $t\ge 1$, that answers the queries in Theorem~\ref{thm:soma} in time 
  $O(\log U \log n(1+\min(t\log W,\log U)t\log W$ $/\log\log n))$, and supports
  point insertions/deletions, and value updates, in time $O(\log U\log n)$.
  \label{thm:dyn-soma}
\end{theorem}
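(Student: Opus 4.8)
The plan is to take the static construction of Theorem~\ref{thm:soma} and replace each of its three static components by the corresponding dynamic counterpart, then recompute space and time. Concretely: (i) the wavelet tree over the $y$-universe $[0,U)$ of depth $\log U$ is made dynamic via Lemma~\ref{lem:dyn-dom}, which costs an extra $o(\log U)$ bits per point and a slowdown factor $O(\log n/\log\log n)$ on every traversal; (ii) at each node $v$ whose height is a multiple of $\tau=t\log W$ we replace the compressed static bitmap $A(v)$ (the unary-coded partial sums) by a dynamic partial-sums structure supporting \wsum-style prefix sums, point insertion/deletion, and value change; (iii) the explicitly stored vectors $W(v)$ at nodes of height a multiple of $\tau$ are replaced by a dynamic array supporting access, insert, delete. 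The query formula~(\ref{eq:wsum}), the block-sampling trick with step $\tau$, and the ``track a value down at most $\tau$ levels'' device are all reused verbatim; nothing about the combinatorics of $\imp$ changes.

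First I would state the dynamic partial-sums bound I need: a structure on a sequence of $m$ numbers of $\Theta(\log W)$ bits each (resp.\ $\Theta(\log U)$ bits, since a block holds $\tau$ values) supporting prefix sum, update, insert and delete in $O(\log m/\log\log m)=O(\log n/\log\log n)$ time and $O(1)$ words per entry; such structures are standard and already cited in spirit through \cite{HM10,NS10}. Then I would walk through the static proof substituting times: each of the $O(\log n)$ nodes of $\imp$ contributes one whole-blocks query to a dynamic partial-sums structure, at cost $O(\log n/\log\log n)$, and the $O(\tau)$ leftover values at each extreme are recovered either from a dynamic $W(v)$ array or by tracking down $\le\tau$ dynamic bitmaps, each step now $O(\log n/\log\log n)$ instead of $O(1)$. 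Multiplying the static $O(\min(\tau,\log n)\,\tau\,\log n)$ bound — where the $\log n$ already counts the $|\imp|=O(\log n)$ nodes and we now read it as the wavelet-tree depth $\log U$ times the per-step dynamic penalty — gives the claimed $O(\log U\log n(1+\min(t\log W,\log U)t\log W/\log\log n))$; the leading ``$1+$'' absorbs the base cost of the $O(\log U)$ node visits of $\imp$ each at $O(\log n/\log\log n)$, which dominates when $\tau$ is small. For updates: inserting/deleting a point touches the $\log U$ bitmaps of the wavelet tree ($O(\log U\log n/\log\log n)$ by Lemma~\ref{lem:dyn-dom}), the $O(\log U/\tau)$ sampled partial-sums structures and $W(v)$ arrays on its root-to-leaf path (each $O(\log n/\log\log n)$), and vector $X$ ($O(\log U)$); a value update is similar. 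The total is $O(\log U\log n)$ as stated, the partial-sums and array maintenance being subsumed. Average and variance carry over exactly as in the static proof, the $w'=w^2$ partial sums living in the same dynamic structures at twice the word width.

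The space accounting is the routine part: the dynamic wavelet tree is $n\log U+o(n\log U)$ bits by Lemma~\ref{lem:dyn-dom}; the sampled dynamic partial-sums structures hold $n\log U/\tau\cdot O(1)$ words $=O(n\log U/t\cdot?)$ — here I must be slightly careful that a dynamic partial-sums entry over a block of $\tau$ unary-coded values stores a number of $O(\log(\tau W))=O(\log W)$ bits plus $O(1)$-word overhead, so summed over all nodes this is $n\log U\cdot O(\log W)/\tau=O(n\log U/t)$ bits, matching the $1/t$ term; the sampled $W(v)$ arrays are again ``for free'' modulo the one-time storage of all values, exactly as argued statically, and contribute $n(\log U)/t$ as well. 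So the per-point space is $\log U(1+o(1)+1/t)$ as claimed.

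The main obstacle I expect is bookkeeping consistency under structural updates rather than any deep idea: when a point is inserted, its value must be threaded into exactly the $O(\log U/\tau)$ partial-sums structures and $W(v)$ arrays that lie on its wavelet-tree path at sampled heights, at the position given by the dynamic \rank\ reductions, and the block boundaries of the sampled $A$ must stay aligned with blocks of $\tau$ entries of the (now shifting) $W(v)$ — i.e.\ the ``every $\tau$-th entry'' sampling must be maintained under insertions in the middle of a block, which is what the dynamic partial-sums structure is for, but one has to make sure its interface (prefix-sum at an arbitrary index, not just block ends) is what we actually query, or else re-derive the block arithmetic $\wsum(v,\tau\lfloor\cdot/\tau\rfloor)$ in the dynamic setting. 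A secondary subtlety is confirming that the $o(\log U)$ redundancy of the dynamic bitmaps, summed with the $O(n\log W/\tau)$ partial-sums overhead, stays within $\log U\cdot o(1)$ per point for all admissible $t$ and $W$; when $n=o(\log U)$ one falls back to the trivial linear scan as in Lemma~\ref{lem:dyn-dom}. None of this is conceptually hard, so the proof is mostly a careful restatement with the substitution table in hand.
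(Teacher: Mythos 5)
Your high-level plan is exactly the paper's: swap each static component for its dynamic counterpart (dynamic wavelet tree via Lemma~\ref{lem:dyn-dom}, dynamic partial sums for $A(v)$, dynamic array for $W(v)$), keep formula~(\ref{eq:wsum}) and the $\tau$-sampling device, and re-total the costs. The space and query-time accounting you give is essentially the one the paper carries out.

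However, the obstacle you flag at the end --- keeping the sampled $A(v)$ aligned with ``every $\tau$-th entry'' of a shifting $W(v)$ --- is not merely a bookkeeping subtlety you can discharge with a hand-wave; it is the one genuinely non-trivial step, and you leave it unresolved. If you stored every prefix sum (so that ``prefix-sum at an arbitrary index'' is available), the $1/t$ space factor would disappear and you would pay $n\log U\log W$ bits. If you try to keep blocks of exactly $\tau$ entries, a single insertion shifts all downstream block boundaries, and naively rebuilding costs $\Theta(|S(v)|/\tau)$ partial-sum updates. The paper's fix is to relax the invariant: maintain blocks of length at most $2\tau$ with every two consecutive blocks summing to at least $\tau$ (a technique attributed to~\cite{GN08}), which guarantees $O(|S(v)|/\tau)$ blocks but requires neither splits nor merges --- only creating/removing empty blocks and occasionally moving one element to a neighbor, each a constant number of operations on the partial-sums structure. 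One then needs a second, small partial-sums structure over the $O(|S(v)|/\tau)$ block lengths (costing $O((|S(v)|/\tau)\log\tau)$ bits) so that the starting position of any block is recoverable in $O(\log n)$ time. Without some such scheme, the claimed $O(\log U\log n)$ update bound does not follow.

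A smaller point: you assert a dynamic partial-sums primitive with $O(\log n/\log\log n)$ operation time; the paper uses the $O(\log n)$-time structure of~\cite{MN08}. Your stronger claim is unsubstantiated by the paper's references, but it turns out not to matter: $O(\log U)$ partial-sum operations at $O(\log n)$ each already account exactly for the additive $O(\log U\log n)$ term in the theorem, so the stated bounds hold with the weaker (and actually cited) primitive.
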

\begin{proof}
The algorithms on the wavelet tree bitmaps are carried out verbatim,
now on the dynamic data structures of Lemma~\ref{lem:dyn-dom}, which add
$o(n\log U)$ bits of space overhead and multiply the times by $O(\log U /
\log\log n)$. This adds $O(\min(t\log W,\log U)t\log W\log U\log n/\log\log n)$ 
to the query times.

Dynamic arrays to hold the explicit $W(v)$ vectors can be implemented within 
$|S(v)|\log W(1+o(1))$ bits, and provide access and indels in time 
$O(\log n / \log\log n)$ \cite[Lemma~1]{NS10}. This adds 
$O(t \log W \log n / \log\log n)$ to the query times, which is negligible.

For insertions we must insert the new bits at all the levels as in
Lemma~\ref{lem:dyn-dom}, which costs $O(\log U \log n /\log\log n)$
time, and also insert the new values in $W(v)$ at $1+(\log U)/(t\log
W)$ levels, which in turn costs time $O((1+(\log U)/(t\log W))$ $\log n /
\log\log n)$ (this is negligible compared to the cost of updating
the bitmaps). Deletions are analogous.  To update a value we just
delete and reinsert the point.

A structure for dynamic searchable partial sums \cite{MN08} takes
$n\log W + o(n\log W)$ bits to store an array of $n$ values, and supports 
partial sums, as well as insertions and deletions of values, in time 
$O(\log n)$. 
Note that we carry out $O(\log U)$ partial sum operations per query.
We also perform $O(\log U)$ updates when points are inserted/deleted.
This adds $O(\log U \log n)$ time to both query and update complexities.

Maintaining the {\em sampled} partial sums $A(v)$ is the most complicated part. 
Upon insertions and deletions we cannot maintain a fixed block size $\tau$. 
Rather, we use a technique \cite{GN08} that ensures that the blocks are
of length at most $2\tau$ and two consecutive blocks add up to at least
$\tau$. This is sufficient to ensure that our space and time complexities hold.
The technique does not split or merge blocks,
but it just creates/removes empty blocks, and moves one
value to a neighboring block. All those operations are easily carried out with
a constant number of operations in the dynamic partial sums data structure.  

Finally, we need to mark the positions where the blocks start. We can maintain
the sequence of $O(|S(v)|/\tau)$ block lengths using again a partial sums data
structure \cite{MN08}, which takes $O((|S(v)|/\tau) \log \tau)$ bits. 
The starting position of any block is obtained as a partial sum, in time 
$O(\log n)$, and the updates required when blocks are created or 
change size are also carried out in time $O(\log n)$. These are 
all within the same complexities of the partial sum structure for
$A(v)$.
\end{proof}

\paragraph{\bf\em Finite groups and semigroups}

The solution applies to finite groups $(G,\oplus,^{-1},$ $0)$. The dynamic 
structure for partial sums \cite{MN08} can be easily converted into one
that stores the local ``sum'' $w(p_i) \oplus w(p_{i+1}) \oplus \ldots w(p_j)$
of each subtree containing leaves $p_i, p_{i+1}, \ldots p_j$.
The only obstacle in applying it to semigroups is that we cannot move
an element from one block to another in constant time, because we have to
recalculate the ``sum'' of a block without the element removed. This takes
time $O(\tau)$, so the update time becomes $O(\log U (\log n + t\log W))$.

\section{Range Minima and Maxima}
\label{sec:rmq}

For the one-dimensional problem there exists a data structure using just
$2n+o(n)$ bits, which answers queries in constant time without accessing
the values \cite{Fis10}. This structure allows for a better space/time
tradeoff compared to range sums.

For the queries that follow we do not need the exact $w(p)$ values, but just
their relative order. So we set up a bitmap $V[1,W]$ where the values
occurring in the set are marked. This bitmap can be stored within at most
$m\log(W/m) + O(m)$ bits \cite{OS07}, where $m \le \min(n,W)$ is the number of 
unique values. This representation converts between actual value
and relative order in time $O(\log m)$, which will be negligible.
This way, many complexities will be expressed in terms of $m$.

\begin{theorem}
  Given $n$ two-dimensional points with associated values in $[0,W)$,
  the minimum of the point values inside a query rectangle $Q = [x_0,x_1] \times
  [y_0,y_1]$, $\mina$, can be found in time 
  $O(\min(t\log m,\log n)t\log n)$,
  with a structure using $n\log n (1+1/t)$ bits, for any $t \ge 1$ 
  and $m=\min(n,W)$. 
  The maximum of the point values inside a query rectangle $Q$ can be found
  within the same time and space bounds.
  \label{thm:mina}
\end{theorem}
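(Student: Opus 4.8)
The plan is to mimic the proof of Theorem~\ref{thm:soma} step by step, replacing the prefix‑sum machinery on the vectors $W(v)$ with range‑minimum machinery. By the remark preceding the statement we never need the exact point values, only their rank in $[0,m)$: we keep the bitmap $V[1,W]$, which converts a value to its rank in time $O(\log m)$, and from now on $W(v)$ denotes the sequence of \emph{ranks} of the values of $S(v)$, so every value we store costs $\lceil\log m\rceil$ instead of $\lceil\log W\rceil$ bits. With this, $\mina$ is computed exactly as in Lemma~\ref{lem:conta} and Eq.~(\ref{eq:wsum}): as the minimum, over the $O(\log n)$ nodes $v\in\imp$, of the minimum of the sub‑interval $(\Rp(\raiz,v,x_0-1),\Rp(\raiz,v,x_1)]$ of $W(v)$; all the $O(\log n)$ reductions $\Rp(\raiz,y_0\{..d\},x)$ and $\Rp(\raiz,y_1\{..d\},x)$ for $x\in\{x_0-1,x_1\}$ are still obtained in total time $O(\log n)$.

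For the structure I would, as in Theorem~\ref{thm:soma}, group the entries of each $W(v)$ into blocks of $\tau=t\log m$ consecutive positions and, at \emph{every} node $v$, keep (i) the array of the $|S(v)|/\tau$ per‑block minima, of $\lceil\log m\rceil$ bits each, and (ii) a Fischer range‑minimum structure \cite{Fis10} over that array, which adds only $2|S(v)|/\tau+o(|S(v)|/\tau)$ bits and returns the \emph{position} of the minimum of any range of blocks in $O(1)$ time — combined with (i), the minimum \emph{value} over any block‑aligned sub‑interval of $W(v)$ in $O(1)$. As in Theorem~\ref{thm:soma}, to resolve the (at most two) leftover partial blocks of fewer than $\tau$ entries at the extremes I would also store the vectors $W(v)$ explicitly at every node whose height is a multiple of $\tau$ — including the leaves, whose copy is already paid for by the omitted $n\lceil\log m\rceil$ bits of values — so that any entry of $W(v)$ can be produced by a descent of at most $\min(\tau,\log n)$ levels; additionally I would keep a Fischer structure over $W(u)$ at each such sampled node $u$ (again $O(1)$ bits per entry). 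Summing, (i) and (ii) over all nodes total $O(n\log n/t)$ bits and the explicit copies at sampled levels add $O(n\log n/t)$ more, so the whole structure occupies $n\log n(1+O(1/t))$ bits; the constant can be brought down to the claimed $n\log n(1+1/t)$ with the refinement of \ref{app:variance}.

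For a query, each sub‑interval of $W(v)$, $v\in\imp$, is split into a block‑aligned part — resolved in $O(1)$ by (i)+(ii) — and two partial blocks of $<\tau$ entries. I expect the evaluation of a partial‑block minimum to be the main obstacle: done naively (reading all $\Theta(\tau)$ entries, each at cost $O(\min(\tau,\log n))$ via a descent) it would only reproduce the $O(\min(\tau,\log n)\,\tau\log n)$ bound of the sum case, whereas here we must save an extra $\log m$ factor. The leverage is that a partial block is a \emph{contiguous run} of at most $\tau$ points of $S(v)$, so on descending to the nearest sampled level it becomes a disjoint union of contiguous sub‑intervals of the vectors of the reached sampled descendants, each of which the Fischer structure stored there resolves (position and value) in $O(1)$ time; the cost then collapses to the $O(\min(\tau,\log n))$ descent plus the number of sampled descendants reached, and the delicate point is to keep that number down to $O(t)$ (for instance by limiting the branching of the descent, or by equipping each node with a ``distinct values in a range'' helper). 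This gives $O(\min(t\log m,\log n)\,t)$ per node of $\imp$, hence $O(\min(t\log m,\log n)\,t\log n)$ overall. The maximum is symmetric — keep a second ``reversed'' Fischer structure (or store per‑block maxima), which does not change the asymptotic space since an RMQ and a range‑maximum structure over the same array have the same size.
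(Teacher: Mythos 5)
You have correctly identified the overall strategy (wavelet tree enriched with Fischer RMQ structures over per-node sequences, sparsified vectors, descent to recover values), but your choice of parameters introduces a gap that the paper avoids. You tie the RMQ block size to the sampling period $\tau=t\log m$, so a partial block at the extremes of a node's range can contain up to $\tau=t\log m$ entries; resolving these dominates the query. You see the problem yourself and try to close it by placing extra Fischer structures at the sampled levels and claiming the descent cost ``collapses to $O(\min(\tau,\log n))$ plus the number of sampled descendants reached,'' keeping that number at $O(t)$. This does not hold: a contiguous run of $\tau$ positions of $W(v)$, tracked down $d$ levels, fragments into up to $\min(2^d,\tau)$ non-empty pieces, so in the worst case the descent visits $\Theta(\tau\min(\tau,\log n))$ nodes and reaches $\Theta(\tau)$ sampled descendants, not $O(t)$. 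Neither ``limiting the branching'' nor a ``distinct values in a range'' helper is developed enough to change this, so as written your query time is $O(t\log m\cdot\min(t\log m,\log n)\log n)$ — a $\log m$ factor off the claim, exactly the factor you flagged as needing to be saved.

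The paper sidesteps all of this by decoupling the two parameters. It keeps $\tau=t\log m$ only as the spacing at which explicit $W(v)$ vectors are stored, but chooses a much smaller block size $r=t$ for the RMQ layer: it builds the Fischer structure over the length-$|S(v)|/r$ sequence $W'(v)$ of per-block minima (without storing $W'(v)$ itself, since Fischer's structure returns positions without array access). An RMQ query then identifies the block containing the minimum in $O(1)$, and only $O(r)=O(t)$ entries — that block plus the two partial blocks of size $<r$ — need to be materialized, each by a descent of $O(\min(\tau,\log n))$ levels. This gives $O(t\min(t\log m,\log n)\log n)$ immediately, with space $n\log n+O((n\log n)/r)+(n\log n)/t=n\log n(1+O(1/t))$, and no auxiliary Fischer structures at sampled nodes or explicit per-block minima arrays are needed. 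If you adopt the two-parameter split $(r,\tau)=(t,t\log m)$ in place of your single $\tau$, the rest of your argument goes through.
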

\begin{proof}
We associate to each node $v$ the one-dimensional data structure \cite{Fis10}
corresponding to $W(v)$, which takes $2|W(v)|+o(|W(v)|)$ bits. This adds 
up to $2n\log n + o(n\log n)$ bits overall. We call 
$\wmin(v,i,j) = \mathrm{arg}\,\min_{i\le s\le j} W(v)[s]$ the one-dimensional
operation. Then we can find in constant time the position of the minimum value 
inside each $v \in \imp$ (without the need to store the values in the node), 
and the range minimum is
\begin{equation*}
 \min_{v \in \imp} W(v)[\wmin(v,\Rp(\raiz, v, x_0),\Rp(v,\raiz, v, x_1+1)-1)].
\end{equation*}
To complete the comparison we need to compute the $O(\log n)$ values 
$W(v)[s]$ of different nodes $v$. By storing the $W(v)$ vectors
of Theorem~\ref{thm:soma} (in the range $[1,m]$) every $\tau=t\log m$ levels, 
the time is just $O(\min(\tau,\log n) \log n)$ because we have to track down just 
one point for each $v \in \imp$. The space is $3n\log n + (n\log n \log m)/\tau
= 3n\log n + (n\log n)/t$ bits. The second term holds for any $t$ even when we 
always store $n\log m$ bits at the leaves, because adding these to the 
$m\log(W/m)+O(m)$ bits used for $V$, we have the $n \lceil \log W \rceil$ bits 
corresponding to storing the bare values and that are not accounted for in our 
space complexities.

To reduce the space further, we split $W(v)$ into blocks of length $r$ and
create a sequence $W'(v)$, of length $|S(v)|/r$, where we take the minimum
of each block, $W'(v)[i] = \min \{ W(v)[(i-1)\cdot r+1], \ldots, 
W(v)[i\cdot r]\}$. The one-dimensional data structures are built over $W'(v)$,
not $W(v)$, and thus the overall space for these is $O((n/r)\log n)$ bits. In
exchange, to complete the query we need to find the $r$ values covered by
the block where the minimum was found, plus up to $2r-2$ values in the extremes
of the range that are not covered by full blocks. The time is thus
$O(r\min(\tau,\log n)\log n)$. By setting $r = t$ we obtain the result.

For $\maxa$ we use analogous data structures.
 \end{proof}

\paragraph{\bf\em Top-$k$ queries in succinct space}

We can now solve the top-$k$ query of Rahul \ai~\cite{RGJR11} by iterating
over Theorem~\ref{thm:mina}. Let us set $r=1$. Once we identify that the overall
minimum is some $W(v)[s]$ from the range $W(v)[i,j]$, we can find the second 
minimum among the other candidate ranges plus the ranges $W(v)[i,s-1]$ and 
$W(v)[s+1,j]$. As this is repeated $k$ times, we pay time $O(\tau(k+\log n))$ to
find all the minima. A priority queue handling the ranges will perform $k$ 
minimum extractions and $O(k+\log n)$ insertions, and its size will be limited 
to $k$. So the overall time is $O(\tau \log n + k(\tau + \log k))$ by using a 
priority queue with constant insertion time \cite{CMP88}. Using $\tau=t\log m$ 
for any $t=\omega(1)$ we obtain time $O(t\log m\log n + kt\log m\log k)$ and 
$n\log n+o(n\log n)$ bits of space. The best current linear-space solution 
\cite{NN12} achieves better time and linear space, but the constant multiplying
the linear space is far from 1.

\subsection{Dynamism} 

We can directly apply the result on semigroups given in 
Section~\ref{sec:dynsum}. Note that, while in the static scenario we 
achieve a better result than for sums, in the dynamic case the result is 
slightly worse.

\begin{theorem}
  Given $n$ points on an $U \times U$ grid, with associated values in $[0,W)$,
  there is a structure using $n\log U (1 + o(1) + 1/t)$ bits, 
  for any $t\ge 1$, that answers the queries in Theorem~\ref{thm:mina} in time 
  $O(\log U \log n(1+\min(t\log W,\log U)t\log W/\log\log n))$, and supports
  point insertions/deletions, and value updates, in time 
  $O(\log U (\log n + t\log W))$.
  \label{thm:dyn-mina}
\end{theorem}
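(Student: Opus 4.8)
The plan is to mirror the structure of Theorem~\ref{thm:dyn-soma}, replacing the dynamic partial-sums and sampled-sums machinery with its semigroup counterpart from the paragraph on finite groups and semigroups at the end of Section~\ref{sec:dynsum}, and the static range-minima structure of Theorem~\ref{thm:mina} with its dynamic realization. First I would take the dynamic wavelet tree of Lemma~\ref{lem:dyn-dom}, of height $\log U$, which gives $n\log U(1+o(1))$ bits, multiplies the static query time by $O(\log U/\log\log n)$, and supports point insertions/deletions in $O(\log U\log n/\log\log n)$ time. Plugging the static cost $O(\min(t\log m,\log n)t\log n)$ of Theorem~\ref{thm:mina} (with $m\le W$, so $\log m\le\log W$) into $t(\cdot)$ of Lemma~\ref{lem:dyn-dom} produces exactly the stated query bound $O(\log U\log n(1+\min(t\log W,\log U)t\log W/\log\log n))$, where the leading $1$ absorbs the bare wavelet-tree traversal cost and the mapping overheads.

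Next I would handle the auxiliary structures attached to the nodes. In place of Fischer's static $2n+o(n)$-bit range-minimum index we cannot use a succinct catalog, since the $W'(v)$ arrays change; instead, as in the semigroup remark of Section~\ref{sec:dynsum}, we keep at each node a dynamic searchable partial-sums-style structure that stores, for each subtree, the minimum of its leaves, supporting range-min queries and insert/delete of values in $O(\log n)$ time and $O(|S(v)|\log W)$ bits, which over all levels is within $n\log U$ bits plus lower-order terms. As in Theorem~\ref{thm:mina} we only materialize the explicit value vectors $W(v)$ (now dynamic arrays, $O(\log n/\log\log n)$ per operation by \cite[Lemma~1]{NS10}) every $\tau=t\log W$ levels, contributing the $1/t$ term to the space, and we again split each $W(v)$ into blocks of length $r=t$ and build the min-index over the $|S(v)|/r$ block minima, so the per-query work to collect the $O(r\min(\tau,\log n))$ uncovered values times the $O(\log n)$ nodes of $\imp$ is subsumed. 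The dynamic-block bookkeeping of \cite{GN08} keeps blocks of size $\Theta(\tau)$ under updates, and the block-length partial-sums structure adds only $O((|S(v)|/\tau)\log\tau)$ bits and $O(\log n)$ time, all absorbed.

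For the update bound I would note the three contributions: inserting the new bit at each of the $\log U$ wavelet-tree levels costs $O(\log U\log n/\log\log n)$; updating the $\log U$ dynamic min-structures (one per level crossed) costs $O(\log U\log n)$; and inserting the new value into the $O(1+(\log U)/(t\log W))$ explicitly stored vectors, together with the constant number of dynamic-block operations per affected level, costs $O((\log U/(t\log W))\log n/\log\log n)$, which is dominated. The key extra cost, absent in the group case of Theorem~\ref{thm:dyn-soma}, is that moving a value between blocks in the semigroup setting forces recomputation of a block ``minimum'' from scratch in $O(\tau)=O(t\log W)$ time, exactly as noted in the semigroup paragraph; this happens $O(\log U)$ times per point-update, adding $O(t\log W\log U)$ and giving the stated $O(\log U(\log n+t\log W))$. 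Value updates are done by delete-then-reinsert, within the same bound. The main obstacle, then, is precisely this block-recomputation issue: verifying that the amortized/ worst-case bookkeeping of \cite{GN08} still yields $O(1)$ block moves per update under the semigroup constraint, so that the $O(t\log W)$ penalty is incurred only a bounded number of times per level and the update cost does not inflate further.
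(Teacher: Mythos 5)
Your high-level plan — reuse the dynamic wavelet tree of Lemma~\ref{lem:dyn-dom}, replace the group partial-sums $A(v)$ of Theorem~\ref{thm:dyn-soma} by the semigroup variant that stores per-subtree local ``sums'' (here, minima), keep the sampled $W(v)$ arrays every $\tau=t\log W$ levels as dynamic arrays, and manage blocks with the technique of \cite{GN08} — is exactly what the paper's one-line remark ``directly apply the result on semigroups'' intends, and your breakdown of the update cost into the bitmap, min-structure, and block-recomputation terms is the right one.

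However, there is a genuine space error in the way you sized the min-index. You transplanted the static block parameter $r=t$ from Theorem~\ref{thm:mina}, but that parameter was tuned to Fischer's succinct RMQ, which costs $O(1)$ bits per entry of $W'(v)$, giving $O(|S(v)|/t)$ bits per node. The dynamic semigroup structure you propose stores a $\Theta(\log W)$-bit value at each tree node, so with $r=t$ it costs $\Theta(|S(v)|\log W/t)$ bits per wavelet-tree node and $\Theta(n\log U\log W/t)$ bits overall — a $\log W$ factor more than the $n\log U/t$ the theorem allows. (Your earlier, unblocked figure ``$O(|S(v)|\log W)$ bits, which over all levels is within $n\log U$ bits'' is off by an even larger $\log W$ factor and cannot be right.) The correct choice is to take the semigroup block size equal to $\tau=t\log W$, identical to the block size used for $A(v)$ in Theorem~\ref{thm:dyn-soma}; then the per-node cost is $O((|S(v)|/\tau)\log W)=O(|S(v)|/t)$ bits and the total is $O(n\log U/t)$ as required. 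With $\tau$-blocks the separate parameter $r$ disappears entirely: the extremes of a range contribute $O(\tau)$ elements per node of $\imp$, each retrieved by descending $O(\min(\tau,\log U))$ levels at $O(\log n/\log\log n)$ per step, which is exactly the claimed $O(\log U\log n\cdot\min(t\log W,\log U)\,t\log W/\log\log n)$ term.

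This also exposes an internal inconsistency in your update analysis: you charge $O(\tau)=O(t\log W)$ for recomputing a block minimum when an element migrates between blocks, which is only correct if blocks have length $\Theta(\tau)$. With your $r=t$ blocks the recomputation would cost only $O(t)$, so your stated update time and your stated space come from two different (and mutually incompatible) block sizes. Using $\tau$-blocks throughout makes the space, query, and update bounds all consistent and recovers the theorem.
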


\section{Range Majority for Fixed $\alpha$}
\label{sec:majf}

In this section we describe a data structure that answers 
$\alpha$-majority queries for the case where $\alpha$ is fixed at
construction time. Again, we enrich the wavelet tree with additional
information that is sparsified. We obtain the following result.

\begin{theorem}
  Given $n$ two-dimensional points with associated values in $[0,W)$ and
  a fixed value $0<\alpha<1$, all the $\alpha$-majorities inside a query 
  rectangle $Q = [x_0,x_1] \times [y_0,y_1]$, $\maja$, can be found in time 
  $O(t\log m\log^2 n)$, with a structure using $n((2+1/t)\log n + \log m)$
  bits, for any $t \ge 1$ and $m=\min(n,W)$. 
  \label{thm:fmaja}
\end{theorem}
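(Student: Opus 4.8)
The plan is to reuse the enriched-wavelet-tree framework of Theorems~\ref{thm:soma} and \ref{thm:mina}, but now attaching to each node $v$ enough information to recover, for any sub-range $W(v)[i,j]$, a small set of \emph{candidate} $\alpha$-majority values. Recall that an $\alpha$-majority of $Q$ must occur more than $\alpha\cdot\conta$ times; and $\conta = O(\log n)$ partial counts summed over $\imp$. The key classical fact I would use is that a $1/k$-majority of a union of $O(\log n)$ intervals, if it exists, must be a $\Omega(1/(k\log n))$-majority of at least one of those intervals --- or, more usefully in the wavelet-tree setting, it suffices to keep at each node $v$ a succinct structure that returns the $O(1/\alpha)$ values that could be $\alpha$-majorities of a query sub-range of $W(v)$, together with the machinery to \emph{verify} each candidate's true frequency inside $Q$ by a $\rank$-style count across $\imp$.

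Concretely, first I would set up, as in Section~\ref{sec:rmq}, the bitmap $V[1,W]$ mapping values to their relative order in $[1,m]$, costing the $\log m$ term per point. Then, at each wavelet-tree node $v$, I store a linear-space fixed-$\alpha$ range-majority structure for the one-dimensional sequence $W(v)$ --- the Durocher \ai~\cite{DHMNS11} structure, which uses $O(|S(v)|(1+\log(1/\alpha)))$ integers and answers one-dimensional $\alpha$-range-majority in $O(1/\alpha)$ time, returning a candidate set with no false negatives. Summed over all nodes this is $O(n\log n\log(1/\alpha))$ bits; since $\alpha$ is a fixed constant this is $O(n\log n)$, matching the claimed $n((2+1/t)\log n + \log m)$ budget (the $2$ covering the base wavelet tree plus this structure, the $1/t$ covering the sampled $W(v)$ vectors exactly as in Theorem~\ref{thm:mina}). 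To answer \maja: for each $v\in\imp$, query its one-dimensional structure on the sub-range $W(v)[\Rp(\raiz,v,x_0),\Rp(\raiz,v,x_1+1)-1]$ to get $O(1/\alpha)$ candidate values; collect all $O((1/\alpha)\log n)$ candidates; then for each distinct candidate value $c$, count its occurrences in $Q$ by running a value-restricted count --- i.e. tracking $c$'s sub-grid through the wavelet tree and summing over $\imp$ in $O(\log n)$ time, or equivalently computing $\conta$ restricted to points with value $c$. Report those exceeding $\alpha\cdot\conta$. Since a true $\alpha$-majority of $Q$ occurs $>\alpha\conta$ times in total, it must be a $>\alpha\conta$-fraction of one of the $O(\log n)$ pieces that has at least that many points, hence is returned as a candidate by that piece's one-dimensional structure; no false negatives. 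This gives $O((1/\alpha)\log n)$ candidates each verified in $O(\log n)$, i.e. $O((1/\alpha)\log^2 n)$ time --- but we must still account for recovering the $W(v)[s]$ values at nodes not storing $W(v)$ explicitly, which as in Theorem~\ref{thm:mina} costs a tracking factor, yielding the stated $O(t\log m\log^2 n)$ after sparsifying the $W(v)$ vectors every $\tau = t\log m$ levels; note $1/\alpha$ is a constant absorbed into the $O$.

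I expect the main obstacle to be the \emph{space bookkeeping and the no-false-negative argument together}: one must be careful that the one-dimensional per-node structure is queried on exactly the sub-range of $W(v)$ corresponding to $Q\cap(\text{band of }v)$, that the $O(\log n)$ bands partition $Q$ (so total frequencies really add), and that storing $W(v)$ only every $\tau$ levels still lets us both \emph{run} the per-node majority structure (which needs the values, or an index into them) and \emph{verify} candidates --- this forces the per-node structures themselves to live at sampled levels, or to be expressible over tracked-down values, which is where the $t\log m$ tracking factor and the $1/t$ space term enter. A secondary subtlety is deduplicating the $O((1/\alpha)\log n)$ candidates and handling the value$\leftrightarrow$order conversion via $V$ within negligible $O(\log m)$ time per candidate. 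None of these is deep, but the interplay of the three sparsification parameters with the correctness of the majority-candidate collection is the part that needs care.
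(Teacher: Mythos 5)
There is a genuine gap in the space accounting that is not fixable by the sparsification you sketch. Storing a Durocher~\ai\ structure on $W(v)$ at \emph{every} wavelet-tree node costs $O(|S(v)|(1+\log(1/\alpha)))$ \emph{integers}, i.e.\ $O(|S(v)|\log n)$ bits per node (for constant $\alpha$); summed over the $\log n$ levels, where each level holds $n$ elements in total, this is $\Theta(n\log^2 n)$ bits, a full $\log n$ factor over the claimed $n((2+1/t)\log n+\log m)$ budget. (Your line ``Summed over all nodes this is $O(n\log n\log(1/\alpha))$ bits'' conflates integers with bits.) The authors explicitly considered this route and remark at the end of Section~\ref{sec:majv} that using the Durocher \ai\ structure per node gives inferior results. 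Your suggested remedy --- store the per-node majority structures only at sampled levels --- does not work the way it does for $W(v)$ in Theorems~\ref{thm:soma}/\ref{thm:mina}: a Durocher query is on an arbitrary sub-range of $W(v)$, and if the structure is absent at $v$ there is no local way to answer it from a structure $\tau$ levels below (the sub-range of $W(v)$ fans out into $2^\tau$ disjoint pieces, losing the $O(1/\alpha)$ bound).

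The paper's proof replaces Durocher with a custom, \emph{inherently sparsified} candidate structure $T(v)$: a balanced range tree over $W(v)$ whose leaves cover blocks of size $s/\alpha$ (with $s=t\log m$), storing at each internal node only the at most $1/\alpha$ $\alpha$-majorities of that node's block. This costs $(2|W(v)|/(s/\alpha))(1/\alpha)\log m = O(|W(v)|/t)$ bits per wavelet-tree node, which is what makes the $1/t$ term work. The price is that a query sub-range of $W(v)$ is only covered by $T(v)$ up to the nearest block boundaries, and the $O(s)$ boundary elements must be fetched by traversing the wavelet tree --- this is where the $t\log m$ factor in the time comes from, not (as you suggest) from recovering $W(v)[s]$ for sampled levels. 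Finally, your verification step is underspecified: the paper's $O(\log n)$ per-candidate verification relies on per-value grids $G(c)$ plus wavelet trees $X_c,Y_c$ over value identifiers, and these contribute $n\log n + n\log m + o(n)$ bits --- precisely the ``$2\log n$'' and ``$\log m$'' parts of the stated space. Without these structures being spelled out, neither the verification time nor the space bound of the theorem is established.
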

\begin{proof}
We say that a set of values  $C$ is a set of {\em $\alpha$-candidates}
for $S'\subset S$ if each $\alpha$-majority value $w$ of $S'$ belongs to $C$.
In every wavelet tree node $v$ we store an auxiliary data structure $A(v)$ that 
corresponds to elements of $W(v)$. The data structure $A(v)$ enables us to 
find the set of $\alpha$-candidates for any range $[r_1\cdot s,r_2\cdot s]$ in $W(v)$,
for a parameter $s = t\log m$. We implement $A(v)$ as a balanced binary range 
tree $T(v)$ on $W(v)$. Every leaf of $T(v)$ corresponds to an interval of 
$W(v)$ of size $s/\alpha$. The range of an internal node $w$ of $T$ is the 
union of ranges associated with its children. In every such node $w$, we store 
all $\alpha$-majority values for the range of $w$ (these are at most $1/\alpha$
values). The space required by $T(v)$ is $(2|W(v)|/(s/\alpha))(1/\alpha)\log m 
= O(|W(v)|/t)$ bits, which added over all the wavelet tree nodes $v$ sums 
to $(n\log n)/t$ bits.

Given an interval $[r_1\cdot t,r_2\cdot t]$, we can represent it as a 
union of $O(\log n)$ ranges for nodes $w_i\in T(v)$. 
If a value is an $\alpha$-majority value for $[r_1\cdot t,r_2\cdot t]$, then it is
an $\alpha$-majority value for at least one $w_i$. 
Hence, a candidate set for $[r_1\cdot t,r_2\cdot t]$ is the union of values stored in 
$w_i$. The candidate set contains $O((1/\alpha)\log n)$ values and can be 
found in $O((1/\alpha)\log n)$ time. 

Moreover, for every value $c$, we store the grid $G(c)$ of 
Lemma~\ref{lem:conta}, which enables us to find the total number of elements 
with value $w$ in any range $Q=[x_0,x_1]\times [y_0,x_1]$.  
Each $G(c)$, however, needs to have the coordinates mapped to the rows
and columns that contain elements with value $c$. We store sequences
$X_c$ and $Y_c$ giving the value identifiers of the points sorted by $x$-
and $y$-coordinates, respectively. By representing them as wavelet trees,
they take $2n\log m + o(n)$ bits of space and map in constant time any
range $[x_0,x_1]$ or $[y_0,y_1]$ using $rank$ operations on the sequences,
in $O(\log m)$ time using the wavelet trees. Then the local grids, which
overall occupy other $\sum_{c \in [1,m]} n_c \log n_c + o(n_c) \le 
n \log (n/m)+o(n)$ bits, complete the range counting query in time $O(\log n)$.
So the total space of these range counting structures is $n\log n + n\log m
+o(n)$.

To solve an $\alpha$-majority query in a range $Q=[x_1,x_2]\times [y_1,y_2]$,
we visit each node $v\in \imp$. We identify the longest interval 
$[r_1\cdot s,r_2\cdot s]\subseteq [\Rp(\raiz, v, x_0),\Rp(\raiz, v,x_1)]$. 
Using $A(v)$ the candidate values in $[r_1\cdot s,r_2\cdot s]$ can be found 
in time $O((1/\alpha)\log n)$. Then we obtain the values of the elements in 
$[\Rp(\raiz, v, x_0),r_1 \cdot s)$ and $(r_2\cdot s,\Rp(\raiz, v, x_1)]$, in time 
$O(s \log n)$ by traversing the wavelet tree. Added over all
the $v\in\imp$, the cost to find the $(1/\alpha + s)\log n$ candidates 
is $O((1/\alpha + s\log n)\log n)$. Then their frequencies
in $Q$ are counted using the grids $G(c)$ in time
$O((1/\alpha + s)\log^2 n)$, and the $\alpha$-majorities are finally
identified. 

Thus the overall time is $O(t \log m \log^2 n)$. The space is
$n(2\log n + \log m + (\log n)/t)$, higher than for the previous problems
but less than the structures to come.
\end{proof}

A slightly better (but messier) time complexity can be obtained by
using the counting structure of Bose \ai~\cite{BHMM09} instead of that
of Lemma~\ref{lem:conta}, storing value identifiers every $s$ tree
levels, $O(t\log m\log n (\min(t\log m,\log n)+\log n / \log\log n))$.
The space increases by $o(n\log (n/m))$. On the other hand, by using
$s=t=1$ we increase the space to $O(n\log n)$ integers and reduce the
query time to $O(\log^2 n)$.

\section{Range Median and Quantiles}
\label{sec:quantile}

We compute the median element, or more generally,
the $k$-th smallest value $w(p)$ in an area $Q = [x_0,x_1] \times [y_0,y_1]$
(the median corresponds to $k=\conta/2$). 

From now on we use a different wavelet tree decomposition, on the 
universe $[0,m)$ of $w(\cdot)$ values rather than on $y$ coordinates.
This can be seen as a wavelet tree on grids rather than on sequences:
the node $v$ of height $h(v)$ stores a grid $G(v)$ with the points $p \in \Pe$ 
such that $\lfloor w(p)/2^{h(v)} \rfloor = L(v\{..\lceil\log m\rceil-h(v)\})$. 
Note that each leaf $c$ stores the points $p$ with value $w(p)=c$.
\begin{theorem}
  Given $n$ two-dimensional points with associated values in $[0,W)$, 
  the $k$-th smallest value of points within a query rectangle $Q = [x_0,x_1] \times
  [y_0,y_1]$, $\quantile$, can be found in time $O(\ell \log n \log_\ell m)$,
  with a structure using $n\log n \log_\ell m+O(n\log m)$
  bits, for any $\ell\in[2,m]$ and $m=\min(n,W)$. 
\label{thm:quantile}
\end{theorem}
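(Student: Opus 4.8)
The plan is to use a mult-ary wavelet tree of arity $\ell$ built on the universe $[0,m)$ of values, where each node $v$ stores, as a grid $G(v)$ of the points whose values fall in its range, a structure supporting the counting query of Lemma~\ref{lem:conta}. The quantile search then descends from the root along the unique root-to-leaf path that contains the $k$-th smallest value in $Q$: at each of the $\lceil\log_\ell m\rceil$ levels we must determine which of the $\ell$ children holds the sought element and adjust $k$ accordingly.

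First I would set up the structure: with arity $\ell$ the tree has height $\lceil \log_\ell m\rceil$, and the grids at each level partition the $n$ points, so by Lemma~\ref{lem:conta} each level contributes $n\log n + o(n)$ bits and the total is $n\log n\log_\ell m + o(n\log n\log_\ell m)$ bits; the extra $O(n\log m)$ term absorbs the bitmap $V[1,W]$ mapping values to the range $[0,m)$ and the per-node $X$/$Y$ coordinate-rank arrays needed so that a query rectangle $Q$, given in global rank space, can be mapped into each child's local rank space. To make the descent work I would store at each node $v$ the $\ell$ sub-grids of its children, together with the machinery (wavelet trees $X_c,Y_c$ on child identifiers, as in the proof of Theorem~\ref{thm:fmaja}, or per-level coordinate arrays) that remaps $[x_0,x_1]\times[y_0,y_1]$ from $v$'s coordinate system into each child's in $O(\log n)$ time.

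Next I would describe the query. Starting at the root with the target rank $k$, at the current node $v$ we have the rectangle $Q$ already expressed in $v$'s local coordinates. We query the counting structure of the $0$-child to get $c_0 = $ number of points of $Q$ with value in that child's range; if $k \le c_0$ we recurse into the $0$-child (rectangle remapped, $k$ unchanged), otherwise we set $k \leftarrow k - c_0$ and move to the next child, repeating for children $1,2,\ldots,\ell-1$ until the cumulative count first reaches $k$. This costs $O(\ell)$ counting queries per level, each $O(\log n)$ by Lemma~\ref{lem:conta}, plus the $O(\log n)$ remapping, so $O(\ell\log n)$ per level and $O(\ell\log n\log_\ell m)$ over the whole path; the leaf reached stores exactly the points with the answer value, which we read off. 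Correctness is the standard invariant that the $k$-th smallest value in $Q$ lies in the child whose range accounts for the $k$-th element in value order, which follows because the children's value ranges partition $[0,m)$ in increasing order.

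The main obstacle I expect is not the descent logic, which is routine, but the bookkeeping of coordinate remapping: each $G(v)$ lives on its own $n_v\times n_v$ grid (one point per row/column), and we must convert the query rectangle between a parent's grid and each child's grid without spending more than $O(\log n)$ per step and without storing more than $O(n\log m)$ total bits for all these mappings across all $\log_\ell m$ levels. I would handle this exactly as in Theorem~\ref{thm:fmaja}: store, per level, wavelet trees on the sequences of child-identifiers of points sorted by $x$ and by $y$; these take $O(n\log\ell)$ bits per level, hence $O(n\log\ell\log_\ell m) = O(n\log m)$ bits total, and support the needed $\rank$-based range remapping in $O(\log\ell)\le O(\log n)$ time. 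The local grids themselves, summed over a level, occupy $\sum n_v\log n_v + o(n_v) \le n\log n + o(n)$ bits, giving the claimed $n\log n\log_\ell m + O(n\log m)$ bits overall. Finally, plugging $\ell$ a constant and $\log n = \Theta(\log m)$ gives time $O(n^\epsilon)$ in the $O(n\log n)$-bit regime as claimed in Table~\ref{tab:results}, wait --- more precisely, taking $\ell = n^\epsilon$ makes $\log_\ell m = O(1/\epsilon)$ and the time $O(n^\epsilon\log n/\epsilon)$ within $O(n\log n)$ bits, matching the table entry.
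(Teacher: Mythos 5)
Your proof is correct and takes essentially the same approach as the paper's. The only presentational difference is that you build an explicit $\ell$-ary wavelet tree on the value universe with a grid at every node, whereas the paper keeps the binary wavelet tree (using per-node bitmaps $X(v)$, $Y(v)$ for coordinate remapping) and simply stores the counting grids $G(v)$ only at every $\lceil\log\ell\rceil$-th level; these are the same data structure, and your space and time accounting matches the paper's.
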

\begin{proof}
We use the wavelet tree on grids just described, representing each grid $G(v)$ 
with the structure of Lemma~\ref{lem:conta}. To solve this query we start 
at root of the wavelet tree of grids and consider its left child, $v$. If
$t = \conta \ge k$ on grid $G(v)$, we continue the search on $v$. Otherwise we
continue the search on the right child of the root, with parameter $k-t$. When
we arrive at a leaf corresponding to value $c$, then $c$ is the $k$-th smallest
value in $\Pe \cap Q$.

Notice that we need to reduce the query rectangle to each of the grids $G(v)$ 
found in the way. We store the $X$ and $Y$ arrays only for the root grid, 
which contains the whole $\Pe$. For this and each other grid $G(v)$, we store 
a bitmap $X(v)$ so that $X(v)[i] = b$
iff the $i$-th point in $x$-order is stored at the $b$-child of $v$.
Similarly, we store a bitmap $Y(v)$ with the same bits in $y$-order.
Therefore, when we descend to the $b$-child of $v$, for $b\in\{0,1\}$, we
remap $x_0$ to $\frank{X(v)}{b}{x_0}$ and $x_1$ to $\frank{X(v)}{b}{x_1+1}-1$,
and analogously for $y_0$ and $y_1$ with $Y(v)$.

The bitmaps $X(v)$ and $Y(v)$ add up to $O(n\log m)$ bits of space. For the
grids, consider that each point in each grid contributes at most $\log n
+o(1)$ bits, and each $p \in \Pe$ appears in $\lceil \log m \rceil-1$
grids (as the root grid is not really necessary). 

To reduce space, we store the grids $G(v)$ only every $\lceil \log \ell \rceil$
levels (the bitmaps $X(v)$ and $Y(v)$ are still stored for all the levels).
This gives the promised space. For the time, the first decision
on the root requires computing up to $\ell$ operations $\conta$, but this
gives sufficient information to directly descend $\log \ell$ levels.
Thus total time adds up to $O(\ell \log n \log_\ell m)$.
 \end{proof}

Again, by replacing our structure of Lemma~\ref{lem:conta} by Bose \ai's 
counting structure \cite{BHMM09}, the time drops to 
$O(\ell\log n\log_\ell m/\log\log n)$ when using 
$n\log n \log_\ell m (1+o(1)) + O(n\log m)$ bits of space.

The basic wavelet tree structure allows us to count the number of 
points $p \in Q$ whose values $w(p)$ fall in a given range $[w_0,w_1]$, within
time $O(\ell \log n \log_\ell m)$ or 
$O(\ell\log n\log_\ell m/\log\log n)$. This is another useful operation
for data analysis, and can be obtained with the formula
 $ 
  \sum_{v \in \impN(w_0,w_1)} \conta.
 $ 

As a curiosity, we have tried, just as done in Sections~\ref{sec:sum} and
\ref{sec:rmq}, to build a wavelet tree on the $y$-coordinates and use a
one-dimensional data structure. We used the optimal linear-space structure of
Brodal and J{\o}rgensen \cite{BJ09}. However, the result is not competitive
with the one we have achieved by building a wavelet tree on the domain of
point values.

\no{
To efficiently support this type of query for larger $k$ values, we resort
to an optimal linear-space one-dimensional structure by Brodal and
J{\o}rgensen \cite{BJ09}.
It finds the $k$-th element of any range in an array of length $n$ in time
$O(\log n / \log\log n)$. 

\begin{theorem}
  Given $n$ two-dimensional points with associated values in $[0,W)$,
  the $k$-th of the point values inside a query rectangle 
  $Q = [x_0,x_1] \times [y_0,y_1]$, $\quantile$, can be computed in time 
  $O(\ell \log^3 n /\log\log n)$, with a structure that requires 
  $O(n\log_\ell n (\log W + \log n))$ bits, for any $\ell \in [1,n]$. 
  \label{thm:quantile2}
\end{theorem}
\begin{proof}
At each node $v$ of the wavelet tree we store the data structure of Brodal and
J{\o}rgensen \cite{BJ09}. Given that we can find any $i$-th element of the 
range $W(v)[R(\raiz,v,x_0),R(\raiz,v,x_1+1)-1]$ for each $v \in \imp$, we can 
assume we have those ranges sorted, with access cost $O(\log n / \log\log n)$. 
Then the problem is that of finding the $k$-th element from $s$ sorted arrays; 
in our case $s=|\imp|=O(\log n)$.
This is solved optimally in $O(s\log\frac{n}{s}) = O(\log^2 n)$ accesses
\cite{FJ82,HNO97}. Multiplying by the access time gives the stated time
complexity.

To reduce space we store the data structure \cite{BJ09} only every 
$\lceil \log\ell \rceil$
levels, which increases the number of subarrays to handle to $s=O(\ell \log n)$
and thus multiplies the time complexity by $\ell$.
 \end{proof}

In Section~\ref{sec:wtW} we obtain an alternative result also for this
problem, using a different decomposition.
}

\section{Range Majority for Variable $\alpha$}
\label{sec:majv}

We can solve this problem, where $\alpha$ is specified at query time, with the 
same structure used for Theorem~\ref{thm:quantile}.

\begin{theorem}
  The structures of Theorem~\ref{thm:quantile} can compute all the 
  $\alpha$-majorities of the point values inside $Q$, \maja, in time
  $O(\frac{1}{\alpha}\ell \log n \log_\ell m)$,
  where $\alpha$ can be chosen at query time.
\label{thm:maja}
\end{theorem}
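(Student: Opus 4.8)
The plan is to reuse the wavelet tree on the universe of point values (the ``wavelet tree on grids'' from Theorem~\ref{thm:quantile}) essentially verbatim, replacing the single top-down descent of the quantile query by a bounded branching search. First I would observe that a value $w$ is an $\alpha$-majority of $Q$ iff, writing $N=\conta$, the leaf $c=V(w)$ has $\conta \ge \lceil \alpha N \rceil$ on its grid $G(c)$; and that this count is monotone along root-to-leaf paths, so any node $v$ on the path to such a leaf also satisfies $\conta \ge \lceil\alpha N\rceil$ on $G(v)$. Hence the set of wavelet-tree nodes at any fixed level whose grid contains at least $\lceil\alpha N\rceil$ points of $Q$ has size at most $1/\alpha$, and the $\alpha$-majority leaves are exactly the leaves reachable by descending only through such ``heavy'' nodes.

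The algorithm is then: compute $N=\conta$ on the root grid; start a search at the root and, at each internal node we reach, descend into each of the two children whose grid contains $\ge \lceil\alpha N\rceil$ points of $Q$ (remapping $Q$ with the $X(v),Y(v)$ bitmaps exactly as in Theorem~\ref{thm:quantile}), pruning the other; when a leaf $c$ is reached, its value is an $\alpha$-majority (the final count check is automatic from the path). Since at most $1/\alpha$ nodes are ``heavy'' on each of the $O(\log m)$ levels, the search visits $O(\frac1\alpha \log m)$ nodes in the full (unsampled) tree. With the grids sampled every $\lceil\log\ell\rceil$ levels as in Theorem~\ref{thm:quantile}, at each sampled node we must, to decide which of the (up to $\ell$) grid-free descendants are heavy, perform up to $\ell$ $\conta$ queries on the next stored grid; but the pruning keeps the number of sampled nodes we ever expand to $O(\frac1\alpha \log_\ell m)$, so the total is $O(\frac1\alpha \ell \log n \log_\ell m)$ counting operations, each costing $O(\log n)$ via Lemma~\ref{lem:conta}. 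This matches the claimed bound, and the space is unchanged since we store exactly the structures of Theorem~\ref{thm:quantile}.

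The one point needing care is the bookkeeping across a sampled block: between two stored grids $G(v)$ and its stored descendants, the intermediate ``virtual'' nodes have no grid, so to test heaviness of a depth-$\lceil\log\ell\rceil$ descendant $v'$ of $v$ we must compute $\conta$ directly on $G(v')$ after remapping $Q$ down through the $\log\ell$ intermediate $X,Y$ bitmaps (which are stored at every level); this costs $O(\log\ell + \log n) = O(\log n)$ per tested descendant, and there are $O(\ell)$ tested descendants per expanded sampled node, reproducing the $\ell$ factor already present in Theorem~\ref{thm:quantile}. I expect the main (very mild) obstacle to be stating the pruning/amortization cleanly --- namely that the branching factor of $1/\alpha$ on every level, together with the level-skipping, yields $O(\frac1\alpha\log_\ell m)$ expanded sampled nodes rather than something larger --- but this is immediate from the monotonicity of $\conta$ along paths. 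Thus the proof is a one-paragraph modification of Theorem~\ref{thm:quantile}'s proof.

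\begin{proof}
Let $N = \conta$, computed on the root grid. A value $w$ is an $\alpha$-majority
of $Q$ exactly when the leaf $V(w)$ has at least $\lceil \alpha N\rceil$ points
of $\Pe\cap Q$ in its grid. Since $\conta$ on $G(v)$ is non-increasing as $v$
descends, the nodes of any fixed level whose grid contains at least
$\lceil\alpha N\rceil$ points of $Q$ number at most $1/\alpha$, and the
$\alpha$-majority leaves are precisely those reached by descending only through
such nodes. We therefore run the descent of Theorem~\ref{thm:quantile} as a
branching search: starting at the root (with $Q$ remapped through the stored
$X(v),Y(v)$ bitmaps as before), at each expanded node we recurse into each child
whose grid has $\conta \ge \lceil\alpha N\rceil$ and prune the other; each leaf
reached yields an $\alpha$-majority, and no $\alpha$-majority is missed.

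Because at most $1/\alpha$ nodes per level are heavy, the search expands
$O(\frac{1}{\alpha}\log_\ell m)$ of the stored (sampled) nodes. At each such
node, deciding which of its up to $\ell$ grid-free descendants at the next
sampling level are heavy requires up to $\ell$ evaluations of $\conta$ on the
next stored grid, each obtained by remapping $Q$ down through $O(\log\ell)$
intermediate $X,Y$ bitmaps (cost $O(\log\ell)$) and one counting query on that
grid (cost $O(\log n)$ by Lemma~\ref{lem:conta}). Hence the total time is
$O(\frac{1}{\alpha}\,\ell\,\log n\,\log_\ell m)$, and the structure is exactly
that of Theorem~\ref{thm:quantile}, so the space bound is unchanged. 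The value
$\alpha$ is used only through $\lceil\alpha N\rceil$ at query time.
\end{proof}
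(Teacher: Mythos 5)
Your proof is correct and achieves the stated bound, but it takes a genuinely different route from the paper's. The paper's proof is a reduction to $\quantileN$: it probes the $\lceil i\alpha\cdot\conta\rceil$-th smallest values for $i=1,\dots,\lfloor 1/\alpha\rfloor$ (with $\alpha\ge\tfrac12$ handled separately via the median), observes that every $\alpha$-majority must occupy one of these probed positions in the sorted order, and then verifies each of the $O(1/\alpha)$ candidates by counting its frequency; this gives $O(1/\alpha)$ calls to the machinery of Theorem~\ref{thm:quantile}. Your proof instead works directly on the value-wavelet-tree, turning the single root-to-leaf descent into a pruned branching search that keeps only ``heavy'' nodes (those whose grid still holds more than $\alpha\cdot\conta$ points of $Q$), and bounds the branching to at most $1/\alpha$ nodes per level by disjointness, yielding the same $O(\frac{1}{\alpha}\ell\log n\log_\ell m)$ count. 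The paper's approach buys modularity (a one-line reduction to an already-proved operation); yours buys directness, avoids the separate candidate-verification pass and the $\alpha\ge\tfrac12$ special case, and makes the $1/\alpha$ factor's origin transparent. One small slip: an $\alpha$-majority requires \emph{strictly more than} $\alpha\cdot\conta$ occurrences, so your heaviness threshold should be $\ge\lfloor\alpha N\rfloor+1$ rather than $\ge\lceil\alpha N\rceil$ (the two differ when $\alpha N$ is an integer); this does not affect the time bound.
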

\begin{proof}
For $\alpha \ge\frac{1}{2}$ we find the median $c$ of $Q$ and 
then use the leaf $c$ to count its frequency in $Q$. If this is more than 
$\alpha\cdot\conta$, then $c$ is the answer, else there is no 
$\alpha$-majority. For $\alpha < \frac{1}{2}$,
we solve the query by probing all the $(i\cdot\alpha)\conta$-th elements in 
$Q$. 
 \end{proof}

Once again, we attempted to build a wavelet tree on $y$-coordinates, using
the one-dimensional structure of Durocher et al.~\cite{DHMNS11} at each level,
but we obtain inferior results.

\no{A recent article \cite{DHMNS11} shows that
one-dimensional $\alpha$-range majority can be solved in time
$O(1/\alpha)$ using $O(n(1+\log\frac{1}{\alpha}) (\log W + \log n))$
bits. Note that $\alpha$ must be chosen when creating the data
structure.  We use this structure for two-dimensional queries.

\begin{theorem}
  Given $n$ two-dimensional points with associated values in $[0,W)$,
  all the $\alpha$-majorities of the point values inside a query rectangle 
  $Q = [x_0,x_1] \times [y_0,y_1]$, $\maja$, can be computed in time 
  $O(\frac{1}{\alpha} \ell \log^2 n)$,
  with a structure that requires 
  $O(n(1+\log\frac{1}{\alpha})\log_\ell n (\log W + \log n))$
  bits, for any $\ell \in [1,n]$. 
  Note $\alpha$ is fixed at data structure creation time.
  \label{thm:maja2}
\end{theorem}
\begin{proof}
At each node $v$ of the wavelet tree we store the data structure of Durocher et
al.~\cite{DHMNS11} on the sequence $W(v)$. 
We also store, for each of the $W$ distinct values $c$, a grid $G(c)$ 
containing only the points with value $c$. 
At query time, we collect all the
$\alpha$-majorities of each of the ranges in $\imp$. Note an $\alpha$-majority
in $\Pe \cap Q$ must be an $\alpha$-majority in at least one of those ranges.
For each of the $\frac{1}{\alpha}\log n$ candidates, we count the number of
occurrences of their value $c$ in the grid $G(c)$, in time $O(\log n)$.

To reduce space, we can store the data structures of Durocher \ai\ only
every $\lceil \log \ell\rceil$ wavelet tree levels. This raises the number of 
ranges to consider to $O(\ell\log n)$ and the space and time formulas follow.

The grids $G(c)$ take negligible extra space, even when $\ell=n$.
Instead of storing arrays $X_c$
and $Y_c$ for each grid $G(c)$, we store bitmaps $X_c$ and $Y_c$ marking
which values in the global $X$ and $Y$ arrays correspond to $c$. Let $n_c$ be 
the number of points in grid $G(c)$. By using a compressed representation
\cite{OS07} for the bitmaps $X_c$ and $Y_c$, the overall space
is $\sum_c n_c \log n_c + o(n_c) \le n \log n + o(n)$ bits for the grids, plus
$\sum_c O(n_c \log \frac{n}{n_c}) = O(n \log W)$ bits for the bitmaps. 
 \end{proof}

In Section~\ref{sec:wtW} we obtain an alternative result for this
problem, using a completely different decomposition.

\subsection{A Wavelet Tree on Values}
\label{sec:wtW}

Another easily solved query is the
$\alpha$-majority (considered in the previous section).
}

\no{
In most aspects these result are preferable over those of
Theorems~\ref{thm:quantile2}
and \ref{thm:maja2}, except when $\log_\ell W = \omega(\log n)$ (\ie\ on very
large universes).}

Culpepper \ai~\cite{CNPT10} show how to find the mode, and in general the $k$ 
most repeated values inside $Q$, using successively more refined $\quantileN$
queries. Let the $k$-th most repeated value occur $\alpha \cdot \conta$ times
in $Q$, then we require at most $4/\alpha$ quantile queries \cite{CNPT10}. The 
same result can be obtained by probing successive values $\alpha = 1/2^i$ with
$\majaN$ queries.

\section{Range Successor and Predecessor}
\label{sec:suc}

The successor (predecessor) of a value $w$ in a rectangle 
$Q=[x_0,x_1]\times[y_0,y_1]$ is the smallest (largest) value larger 
(smaller) than, or equal to, $w$ in $Q$. We also have an efficient solution
using our wavelet trees on grids.

\begin{theorem}
  The structures of Theorem~\ref{thm:quantile} can compute the successor and
  predecessor of a value $w$ within the values of the points inside $Q$,
  $\succa$ and \preda, in time $O(\ell \log n \log_\ell m)$.
  \label{thm:preda}
\end{theorem}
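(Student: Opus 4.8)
The plan is to reuse the wavelet-tree-on-values structure of Theorem~\ref{thm:quantile} essentially verbatim, changing only the navigation rule so that we always descend toward the side of the value universe where the answer must lie. Recall that in that structure the node $v$ at height $h(v)$ stores the grid $G(v)$ of all points whose value has the prefix $L(v\{..\lceil\log m\rceil-h(v)\})$, so the leaves partition the value universe $[0,m)$ and the subtree rooted at $v$ corresponds to a contiguous value interval. A successor query for $w$ in $Q$ is then: find the smallest value $c\ge w$ such that $G(c)\cap Q\neq\emptyset$; symmetrically for predecessor we want the largest $c\le w$.

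The search for \succaN\ proceeds top-down. At the current node $v$ (starting at the root, with the query rectangle remapped to $G(v)$ exactly as in the proof of Theorem~\ref{thm:quantile} using the bitmaps $X(v)$ and $Y(v)$), we must decide whether the answer lies in the $0$-child $v_0$ or the $1$-child $v_1$. Let $b$ be the bit of $w$ at the level separating $v$ from its children. If $b=0$: the value interval of $v_0$ may contain values both below and above $w$, so we first recurse into $v_0$; only if that recursive call reports ``no value $\ge w$ in $G(v_0)\cap Q$'' do we fall back to $v_1$, where \emph{every} value is $>w$, so there we just need the smallest value present, i.e. a $\quantileN$-style leftmost-nonempty descent (always go to the $0$-child whose remapped rectangle is nonempty, using $\conta$ on the child grid to test emptiness). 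If $b=1$: every value in $v_0$ is $<w$, so we skip it and recurse into $v_1$ directly. Predecessor is the mirror image: swap the roles of $0$ and $1$, and replace ``leftmost nonempty'' by ``rightmost nonempty''. Emptiness of a remapped rectangle in a child grid is tested with one $\conta$ call (Lemma~\ref{lem:conta}), and when we reach a leaf $c$ with nonempty intersection we report $c$ (converting back to an actual value in $O(\log m)$ time via $V$, which is negligible).

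For correctness, the invariant maintained along the descent is that the answer, if it exists, has a value in the interval of the current node $v$ and satisfies the $\succaN$/$\predaN$ constraint relative to $w$; this is preserved by the case analysis on the bit $b$ above, because the decision rule never discards a side that could contain a qualifying nonempty value. For the running time, note that we follow essentially a single root-to-leaf path: at each branching where we ``try the natural side first,'' a failure there costs us only a one-sided leftmost/rightmost-nonempty descent on the other side, and that descent is itself a single path; moreover, when we store grids only every $\lceil\log\ell\rceil$ levels (bitmaps $X(v),Y(v)$ still at every level, as in Theorem~\ref{thm:quantile}), each ``super-step'' costs $O(\ell)$ calls to $\conta$ to figure out which of the $\ell$ sub-branches to enter, exactly as in the quantile analysis. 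Hence the total is $O(\ell\log n\log_\ell m)$, matching the stated bound and the space/structure already accounted for in Theorem~\ref{thm:quantile}.

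The main obstacle — really the only subtle point — is bounding the cost of the fallback branches: one has to argue that the ``first try the side toward $w$, then fall back'' strategy does not blow up into exploring two paths per level (which would be fine, still $O(\log_\ell m)$ super-steps) but more importantly that each fallback triggers only a \emph{monotone} one-sided descent (all values on that side already satisfy the inequality, so no further backtracking is ever needed), so the whole query is the concatenation of at most two simple descents and the analysis of Theorem~\ref{thm:quantile} applies unchanged. The remapping bookkeeping for $x_0,x_1,y_0,y_1$ through $X(v)$ and $Y(v)$ and the handling of levels without stored grids are identical to the quantile proof and need only be invoked, not redone.
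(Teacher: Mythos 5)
Your proposal is correct and follows essentially the same approach as the paper: both traverse the wavelet tree on values from Theorem~\ref{thm:quantile}, scan the subtrees covering $[w,+\infty)$ in increasing value order while testing emptiness of the remapped rectangle $Q$ with $\conta$, and descend into the first nonempty one to its leftmost nonempty leaf. The paper phrases this as a left-to-right scan of the $\impN(w,+\infty)$ decomposition, whereas you describe the equivalent top-down recursion with fallback to right siblings; the algorithms, the use of $X(v),Y(v)$ for remapping, the sparsified-grid amortization, and the resulting bound coincide.
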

\begin{proof}
  We consider the nodes $v \in \impN(w, +\infty)$ from left to right, 
tracking rectangle $Q$ in the process.
  The condition for continuing the search below a node $v$ that is in 
  $\impN(w, +\infty)$, or is a descendant of one such node, is that
  $\contaN(Q) > 0$ on $G(v)$. $\succa$ is the value associated with the first
  leaf found by this process. Likewise, $\preda$ is computed by
  searching $\impN(-\infty, w)$ from right to left.
To reduce space we store the grids only every $\lceil \log \ell \rceil$
levels, and thus determining whether a child has a point in $Q$ may cost up to 
$O(\ell \log n)$. Yet, as for Theorem~\ref{thm:quantile}, the total time 
amortizes to $O(\ell \log n \log_\ell m)$. 
 \end{proof}

Once again, storing one-dimensional data structures \cite{CIKRW08,MNU05} on
a $y$-coordinate-based wavelet tree does not yield competitive results.

\no{In one dimension the successor
problem is known as ``range next value''. Crochemore \ai~\cite{CIKRW08}
gave a solution using $O(n^{1+\epsilon})$ integers and $O(1/\epsilon)$ time,
for any $\epsilon>0$. Alternatively, M\"akinen \ai~\cite[Lemma 4]{MNU05} gave
an $O(n\log n)$-bit space solution with time complexity $O(\log u)$. We build
on this second, smaller, structure.

\begin{theorem}
  Given $n$ two-dimensional points with associated values in $[0,W)$,
  the successor of a value $w$, within the values of the points inside a query 
  rectangle $Q = [x_0,x_1] \times [y_0,y_1]$, $\succa$, can be found in 
  time $O(\ell \log n \log u)$, with a structure using
  $O(n\log_\ell n \log n) + n\log W$ bits, for any $\ell \in [1,n]$ and
  $u=\min(n,W)$. The same structure answers predecessor query, $\preda$.
  \label{thm:preda2}
\end{theorem}
\begin{proof}
We associate to each node $v$ a one-dimensional successor structure over the 
values $W(v)$. Then, we find the nodes $v \in \imp$ and 
compute the successor of $w$ within each range $W(v)[i,j]$, 
$\wsucc(v,w,i,j)$. The answer is
\[
 \min_{v \in \imp} \wsucc(v,w,R(\raiz, v, x_0),R(v,\raiz, v, x_1+1)-1).
\]
This is computed in time $O(\log n \log u)$. To use less space we store
the one-dimensional structure only every $\lceil \log\ell \rceil$ levels, and 
then must run $O(\ell \log n)$ one-dimensional queries and take the minimum.
Predecessor is analogous.
 \end{proof}
}

\section{Dynamism}
\label{sec:dynamism}

Our dynamic wavelet tree of Lemma~\ref{lem:dyn-dom} supports range counting 
and point insertions/deletions on a fixed grid in time 
$O(\log U \log n / \log\log n)$ (other tradeoffs exist \cite{Nek09}).
If we likewise assume that our grid is fixed in Theorems~\ref{thm:quantile},
\ref{thm:maja} and \ref{thm:preda}, we can also support point insertions
and deletions (and thus changing the value of a point). 

\begin{theorem}
  Given $n$ points on a $U \times U$ grid, with associated values in $[0,W)$,
  there is a structure using
  $n\log U \log_\ell W (1+o(1))$ bits, for any $\ell \in [2,W]$, that answers 
  the queries $\quantileN$, $\succaN$ and $\predaN$ in time
  $O(\ell \log U \log n \log_\ell W/$ $\log\log n)$, and the $\majaN$
  operations in time
  $O(\frac{1}{\alpha} \ell \log U \log n \log_\ell W /$ $ \log\log n)$. It supports
  point insertions and deletions, and value updates, in
  time $O(\log U \log n \log_\ell W / \log\log n)$. 
  \label{thm:dynamic}
\end{theorem}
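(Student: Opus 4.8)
The plan is to combine the dynamic wavelet tree machinery of Lemma~\ref{lem:dyn-dom} with the static ``wavelet tree on values'' decomposition used in Theorems~\ref{thm:quantile}, \ref{thm:maja} and \ref{thm:preda}. Recall that in the static case we built an outer wavelet tree on the universe $[0,m)$ of values, where each node $v$ stores a grid $G(v)$ on the points whose value falls in $v$'s value-range, and we represented each $G(v)$ with the static structure of Lemma~\ref{lem:conta}. The adaptation is to represent each $G(v)$ instead with the \emph{dynamic} structure of Lemma~\ref{lem:dyn-dom}: a dynamic wavelet tree on the $x,y$-coordinates over the $U\times U$ grid, whose bitmaps are dynamic rank/select structures. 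By Lemma~\ref{lem:dyn-dom} each such $G(v)$ answers $\contaN$ in time $O(\log U\log n/\log\log n)$ (using that $t(\log U) = O(\log U)$ for counting, which is really $O(\log n)$ once we track a single interval), and handles point insertions/deletions in time $O(\log U\log n/\log\log n)$.

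First I would reproduce the space accounting. As in Theorem~\ref{thm:quantile}, each point $p\in\Pe$ appears in $\lceil\log m\rceil$ nodes along one root-to-leaf path of the outer tree, but we only materialize the grids every $\lceil\log\ell\rceil$ levels, so $p$ occurs in $O(\log_\ell m)$ stored grids. Each occurrence costs $\log U(1+o(1))$ bits in the dynamic representation of Lemma~\ref{lem:dyn-dom}, giving $n\log U\log_\ell W(1+o(1))$ bits total (writing $W$ for $m$ as in the theorem statement). The auxiliary bitmaps $X(v),Y(v)$ that remap coordinate ranges when descending the outer tree must now themselves be dynamic; stored at every outer level they add $O(n\log W)$ bits in a dynamic rank/select representation, which is absorbed into the $o(1)$ factor (or handled by the same B-tree device used in Lemma~\ref{lem:dyn-dom} for the coordinate arrays). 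Next I would transcribe the query algorithms verbatim from Theorems~\ref{thm:quantile}, \ref{thm:maja}, \ref{thm:preda}: each makes $O(\ell\log_\ell W)$ calls to $\contaN$ (for $\majaN$ there is the extra $1/\alpha$ factor from probing the $(i\alpha)\contaN$-th quantiles), plus $O(\log_\ell W)$ uses of the $X(v),Y(v)$ bitmaps for remapping; multiplying by the $O(\log U\log n/\log\log n)$ cost of a dynamic $\contaN$ gives the claimed query bounds. Finally, insertion/deletion of a point means updating the $O(\log_\ell W)$ stored grids on the root-to-leaf path via Lemma~\ref{lem:dyn-dom}, plus updating the $O(\log m)$ bitmaps $X(v),Y(v)$; the dominant cost is $O(\log U\log n\log_\ell W/\log\log n)$, and a value update is a deletion followed by a reinsertion.

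The main obstacle I expect is maintaining the auxiliary remapping bitmaps $X(v),Y(v)$ under updates while keeping the amortization arguments of Theorems~\ref{thm:quantile} and \ref{thm:preda} intact. In the static proof the ``descend $\log\ell$ levels at once'' trick relies on being able to remap a coordinate range through intermediate, unmaterialized outer-tree levels using only the $X(v),Y(v)$ bitmaps; in the dynamic setting those bitmaps live at \emph{all} outer levels and must support rank, select, insert and delete in time $O(\log n/\log\log n)$, which the dynamic bitmap structures of \cite{HM10,NS10} provide, but one must check that inserting/deleting a point touches each of them $O(1)$ times and that the $O(\log m)$ such structures on a path sum to the stated update bound. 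The amortization in Theorem~\ref{thm:preda} — that a node with a nonempty interval eventually yields a reported leaf, so the $\ell$-fan-out cost per materialized level amortizes over $\log_\ell W$ levels — carries over unchanged since it is purely combinatorial and independent of whether $\contaN$ is static or dynamic; the only care needed is that repeated $y$-coordinates, now possible on the $U\times U$ grid, are handled exactly as in the remark following Lemma~\ref{lem:dyn-dom}. Once these bookkeeping points are verified, the time and space bounds of the theorem follow by direct substitution.
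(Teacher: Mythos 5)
Your proposal follows essentially the same route as the paper's proof: rebuild the outer wavelet tree on the value universe $[0,W)$, replace each materialized grid $G(v)$ by the dynamic structure of Lemma~\ref{lem:dyn-dom}, run the query algorithms of Theorems~\ref{thm:quantile}, \ref{thm:maja} and \ref{thm:preda} verbatim, and insert/delete a point by tracking it down one root-to-leaf path updating each stored $G(v)$ and each remapping bitmap. One small simplification the paper exploits and you miss: because each $G(v)$ is built on \emph{global} $y$-coordinates $[0,U)$ (Lemma~\ref{lem:dyn-dom} deliberately drops the $Y$ array and handles repeated $y$'s inside the dynamic wavelet tree), the $y$-range $[y_0,y_1]$ never needs remapping as one descends the outer tree, so no $Y(v)$ bitmaps are kept at all --- only the single global $X$ B-tree and the per-level $X(v)$ bitmaps. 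Dropping $Y(v)$ also removes the half of your $O(n\log W)$ overhead you were uncertain how to absorb, and avoids the subtlety you flagged about maintaining two coordinate-remapping hierarchies under updates; with that adjustment your accounting coincides with the paper's.
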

\begin{proof}
We use the data structure of Theorems~\ref{thm:quantile}, \ref{thm:maja}
and \ref{thm:preda}, modified as follows. We build the wavelet tree on the
universe $[0,W)$ and thus do not map the universe values to rank space. The
grids $G(v)$ use the dynamic structure of Lemma~\ref{lem:dyn-dom}, on global
$y$-coordinates $[0,U)$. We maintain the global array $X$ of
Lemma~\ref{lem:dyn-dom} plus the vectors $X(v)$ of Theorem~\ref{thm:quantile},
the latter using dynamic bitmaps
\cite{HM10,NS10}. The time for the queries follows immediately.
For updates we track down the point to insert/delete across the wavelet 
tree, inserting or deleting it in each grid $G(v)$ found in the way, and
also in the corresponding vector $X(v)$.
\end{proof}

\section{Conclusions}
\label{sec:conclusions}

We have demonstrated how wavelet trees \cite{GGV03} can be used for solving
a wide range of two-dimensional queries that are useful for various data
analysis activities. Wavelet trees have the virtue of using little space. By
enriching them with further sparsified data, we support various complex queries
in polylogarithmic time and linear space, sometimes even succinct. Other 
more complicated queries require slightly superlinear space.

We believe this work just opens the door to the possible applications
to data analysis, and that many other queries may be of interest. A prominent 
one lacking good solutions is to find the mode, that is, the most frequent 
value, in a rectangle, and its generalization to the top-$k$ most frequent 
values. There has been some recent progress on the one-dimensional version 
\cite{GNP10} and even in two dimensions \cite{DM11}, but the results are far 
from satisfactory.

Another interesting open problem is how to support dynamism while retaining
time complexities logarithmic in the number of points and not in the grid
size. This is related to the problem of dynamic wavelet trees, in particular
supporting insertion and deletion of $y$-coordinates (on which they build the
partition). Dynamic wavelet trees would also solve many problems in other
areas.

Finally, a natural question is which are the lower bounds that relate the
achievable space and time complexities for the data analysis queries we have
considered. These are well known for the more typical counting and reporting
queries, but not for these less understood ones.

\section*{References}

\bibliographystyle{alpha}
\bibliography{paper}

\appendix

\section{Optimal-Space Representation of Grids} \label{app:succgrids}

We analyze the representation described in Section~\ref{sec:rect-repr}, showing
how it can be made near-optimal in the information-theoretic sense. Recall that
our representation of a set of points of $[0, U)^2$ consists in storing two 
sorted arrays $X$ and $Y$, which reduce the $[0,U)$ values to $[0,n)$. The 
points in the $[0,n) \times [0,n)$ grid have exactly one point per row and one
per column.

An optimal-space representation of the above data uses the
data structure of Okanohara and Sadakane \cite{OS07} for mapping the sorted
$X$ coordinates (where point $X(i)$ is represented as a bit set at position 
$i+X(i)$ in a bitmap of length $n+U$), a similar structure for the $Y$
coordinates, and a wavelet tree for the grid of mapped points. The former 
occupues $n\log\frac{n+U}{n}+O(n) = \log{U+n \choose n} + O(n)$ bits of
space, gives constant-time access to the real coordinate of any point 
$X(i) = select_1(i)-i$, and takes $O(\log\frac{U+n}{n})$ time to map any 
value $x$ to rank space at query time; and 
similarly for $Y$. The wavelet tree requires $n\log n + o(n) = \log n! + O(n)$
bits. Overall, if we ignore the $O(n)$-bit redundancies, the total space is
$\log \left( n!{U+n \choose n}^2 \right)$ bits.



Hence our representation can be in any of $n!{U+n \choose n}^2$ configurations.
Note that we can represent repeated points, which is useful in some cases,
especially when they can have associated values. We show now that our number
of configurations is not much more than the number of possible configurations
of $\Pe$ even if repeated points are forbidden, \ie\ $n$ distinct points from 
$[0,U)^2$ can be in ${U^2 \choose n} \leq {U+n \choose n}^2 n!$ configurations. 
The difference is not so large
because ${U+n \choose n}^2 n!  \leq {U^2 \choose n} c^n$, for any
$c\geq 4$. In terms of bits this means that $2 \log {U+n \choose n} +
\log n! \leq \log {U^2 \choose n} + n \log c$ and therefore our
representation is at most $O(n)$ bits larger than an optimal
representation, aside from the $O(n)$ bits we are already wasting with
respect to  $\log \left( n!{U+n \choose n}^2 \right)$.

To see this, notice that
  ${U+n \choose n}^2 n! = ((U+n)!/(n!U!))^2 n! 
   =  ((U+n)!/U!)^2/n!
   =   (\prod_{i=0}^{n-1}(U+n-i)^2)/n!$.
For sufficiently large $c$, this is 
  $\leq (\prod_{i=0}^{n-1}c(U^2-i))/n! 
   =  c^n U^2!/((U^2-n)!n!)
   =  {U^2 \choose n} c^n$.

We need $c \geq (U+n-i)^2/(U^2-i)$ for any $0 \le i < n$. We next show that, 
if $n \le U$, then $(U+n)^2/U^2 \geq (U + n - i)^2/(U^2-i)$, and thus it
is enough to choose $c \geq (U+n)^2/U^2$. 
%
Simple algebra shows that the condition is equivalent to
$i \le 2 (U+n) - (1+(n/U))^2$. Since we assume for now that
$n/U \leq 1$, the inequality is satisfied if $i \leq 2 (U+n) - 4$. 
Since $i < n$, the inequality always holds (as $U,n \ge 1$).
Thus it is sufficient that $c \ge (U+n)^2/U^2$, which is no larger
than 4 if $n \le U$.

Let us now consider the case $n > U$. Our analysis still holds, up large 
enough $n$. Since $i \leq n-1$, it is sufficient that 
$n-1 \le 2(U+n)-(1+(n/U))^2$. Simple algebra shows this is equivalent
to the cubic inequality $2U^3+nU^2-2nU-n^2 \ge 0$. As a function of $U$
this function has three roots, the only positive one at $U = \sqrt{n}$.
Therefore it is positive for $U \ge \sqrt{n}$, \ie\ $n \le U^2$, which
covers all the possible values of $n$.

\section{Reducing Space for Variance} \label{app:variance}

We now discuss how to bring the $2\lceil\log W\rceil$ space factor
associated to storing weights $w'(p) = w(p)^2$
 closer to $\lceil \log V \rceil$, where $V$ is the overall variance.

Instead of storing $w'(p)$,
 we can store $w''(p) = (w(p) - \lceil T/n \rceil)^2$, where 
 $T/n$ is the average of all the points. To obtain
 $\vara$ from the sum of the $w''$ in $Q$ notice that $(w(p) -
 (T_Q/q))^2 = (w(p) - \lceil T/n \rceil)^2 - 2 (w(p) - \lceil T/n
 \rceil) (\lceil T/n\rceil - (T_Q/q)) + ((T/n) - (T_Q/q))^2$, where
 $T_Q = \soma$ and $q = \conta$. This formula makes use of the stored
 $w''(p)$ values, as well as queries $\soma$ and $\conta$.
Note that rounding is used to keep the values as integers,
 hence limiting the number of bits necessary in its representation.

 To avoid numeric instability and wasted space, it is better that
 $T/n$ is close to $T_q/q$. This simultaneously yields smaller $(w(p)
 - \lceil T/n \rceil)^2$ values and reduces the $(\lceil T/n\rceil -
 (T_Q/q))$ factor in the subtraction. To ensure this
 we may (logically) partition the space and use the local average, instead
 of the global one. Each level of the wavelet tree partitions the
 space into horizontal non-overlapping bands, of the form $[j\cdot
 n/2^k,(j+1)\cdot n/2^k)$, for some $k$. At every level we use the
 average of the band in question. This allows us to compute variances
 for rectangles whose $y$ coordinates are aligned with the bands, while the
 $x$ coordinates are not restricted. For a general rectangle $Q$
 we decompose it into band-aligned rectangles, just as with any other
 query on the wavelet tree (recall Equation~(\ref{eq:wsum})). Alternatively,
  we can use a
 variance update formula~\cite{chan1979updating,R1962} that is stable
 and further reduces the instability of the first calculation. We
 rewrite the update formula of Chan~\ai~\cite{chan1979updating} in
 terms of sets, since originally it was formulated in one dimension.
\begin{lemma}[{\cite[Eq.~2.1]{chan1979updating}}]
  Given two disjoint sets $A$ and $B$, which have values $w(p)$
  associated with element $p$, where $T_A = \somaN(A)$, $T_B =
  \somaN(B)$, $m = \contaN(A)$, $n = \contaN(B)$, $S_A = \sum_{p \in
    A}(w(p) - T_A/m)^2$ and $S_B = \sum_{p \in B} (w(p) - T_B/n)^2$,
  the following equalities hold:
  \begin{eqnarray}
    \label{eq:3}
    T_{A \bigcup B} & = & T_A + T_B \\
    S_{A \bigcup B} & = & S_A + S_B + \frac{m}{n(m+n)}(\frac{n}{m}T_A -
    T_B)^2 \label{eq:5}
  \end{eqnarray}
\end{lemma}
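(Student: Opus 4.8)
The plan is to prove both identities by elementary algebra. Identity~\eqref{eq:3} is immediate: since $A\cap B=\emptyset$, $T_{A\cup B}=\somaN(A\cup B)=\sum_{p\in A}w(p)+\sum_{p\in B}w(p)=T_A+T_B$, and nothing more is needed. For~\eqref{eq:5} I would use the classical ``parallel-axis'' decomposition of a sum of squared deviations, applied once to $A$ and once to $B$, both times centered at the common mean of $A\cup B$, and then simplify the resulting correction terms into the asymmetric form stated in the lemma.

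Concretely, first I would record the auxiliary fact that for any finite nonempty multiset $C$ with sum $T_C$, cardinality $c=\contaN(C)$, $S_C=\sum_{p\in C}(w(p)-T_C/c)^2$, and \emph{any} real $\mu$,
\[
  \sum_{p\in C}(w(p)-\mu)^2 \;=\; S_C + c\,(T_C/c-\mu)^2 .
\]
This follows by writing $w(p)-\mu=(w(p)-T_C/c)+(T_C/c-\mu)$ and expanding the square: the quadratic and constant terms contribute $S_C$ and $c\,(T_C/c-\mu)^2$, while the cross term $2(T_C/c-\mu)\sum_{p\in C}(w(p)-T_C/c)$ vanishes because $\sum_{p\in C}w(p)=c\cdot(T_C/c)$.

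Next I would apply this identity with $C=A$ and with $C=B$, in both cases taking $\mu=T_{A\cup B}/(m+n)=(T_A+T_B)/(m+n)$ (by~\eqref{eq:3}), and add the two equations. By disjointness the left-hand sides sum to $\sum_{p\in A\cup B}(w(p)-\mu)^2$, which equals $S_{A\cup B}$ precisely because $\mu$ is the mean of $A\cup B$, giving
\[
  S_{A\cup B} \;=\; S_A + S_B + m\,(T_A/m-\mu)^2 + n\,(T_B/n-\mu)^2 .
\]
Substituting $\mu=(T_A+T_B)/(m+n)$ yields $T_A/m-\mu=(nT_A-mT_B)/(m(m+n))$ and $T_B/n-\mu=-(nT_A-mT_B)/(n(m+n))$, so the two correction terms are $(nT_A-mT_B)^2/(m(m+n)^2)$ and $(nT_A-mT_B)^2/(n(m+n)^2)$; adding them and using $1/m+1/n=(m+n)/(mn)$ collapses the sum to $(nT_A-mT_B)^2/(mn(m+n))$, and finally writing $nT_A-mT_B=m\,(\tfrac{n}{m}T_A-T_B)$ rewrites this as $\tfrac{m}{n(m+n)}(\tfrac{n}{m}T_A-T_B)^2$, matching~\eqref{eq:5}. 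The argument is entirely routine, so I expect no genuine obstacle; the only points needing a little care are this last factoring step, which forces the two symmetric terms into the single asymmetric form stated, and the tacit hypothesis $m,n\ge 1$ needed for $T_A/m$ and $T_B/n$ to be defined. The identity is an exact equality over $\mathbb{Q}$ and is independent of the integer-rounding discussion preceding the lemma.
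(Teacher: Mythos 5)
Your proof is correct. The paper does not actually prove this lemma; it simply imports it by citation from Chan, Golub, and LeVeque (1979), so there is no internal argument to compare against. Your derivation via the parallel-axis identity $\sum_{p\in C}(w(p)-\mu)^2 = S_C + c\,(T_C/c-\mu)^2$, applied to $A$ and $B$ about the combined mean $\mu=(T_A+T_B)/(m+n)$, is the standard and cleanest route, and the algebra checks out, including the final factoring $nT_A - mT_B = m\bigl(\tfrac{n}{m}T_A - T_B\bigr)$ that produces the asymmetric form in Eq.~(\ref{eq:5}). Your remark that $m,n\ge 1$ is tacitly required is a fair observation; it is consistent with the paper's use, where both blocks being merged are nonempty.
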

Notice that $\varaN(A) = S_A$. 

\end{document}